\documentclass[conference]{IEEEtran}
\IEEEoverridecommandlockouts
    
\usepackage[utf8]{inputenc} 
\usepackage[T1]{fontenc}
\usepackage{url}
\usepackage{cite}
\usepackage{longtable}
\usepackage[cmex10]{amsmath} 
\usepackage{authblk}
\usepackage[top=0.74in,bottom=1.04in,left=0.64in,right=0.63in]{geometry}
\usepackage{algorithm}
\usepackage{algorithmic}
\usepackage{amsthm,amssymb,amsfonts} 
\usepackage{algorithmic}
\usepackage{graphicx}
\usepackage{textcomp}
\usepackage{xcolor}
\usepackage{tikz}
\usepackage{ifthen}

\newcommand{\cA}{\mathcal{A}}

\newcommand{\boldA}{\mathbf{A}}
\newcommand{\boldB}{\mathbf{B}}
\newcommand{\boldC}{\mathbf{C}}

\newcommand{\boldS}{\mathbf{S}}
\newcommand{\boldT}{\mathbf{T}}

\newtheorem{pid axiom}{PID Axiom}
\newtheorem{sid axiom}{SID Axiom}

\newtheorem{remark}{Remark}
\newtheorem{definition}{Definition}
\newtheorem{lemma}{Lemma}
\newtheorem{corollary}{Corollary}
\newtheorem{property}{Property}
\newtheorem{theorem}{Theorem}

\newtheorem{proposition}{Proposition}
\usepackage{ifthen}
\newboolean{showAppendix}   
\setboolean{showAppendix}{true}  % true or false

\def\BibTeX{{\rm B\kern-.05em{\sc i\kern-.025em b}\kern-.08em
    T\kern-.1667em\lower.7ex\hbox{E}\kern-.125emX}}
\begin{document}
\title{
Structural Impossibility of Antichain-Lattice \\Partial Information Decomposition
}

\author{\textbf{Aobo Lyu}$^\star$, \textbf{Andrew Clark}$^{\star}$, and \textbf{Netanel Raviv}$^\dagger$\\
$^\star$Department of Electrical and Systems Engineering, Washington University in St. Louis, St. Louis, MO, USA\\
		$^\dagger$Department of Computer Science and Engineering, Washington University in St. Louis, St. Louis, MO, USA\\
   \texttt{aobo.lyu@wustl.edu}, \texttt{andrewclark@wustl.edu}, \texttt{netanel.raviv@wustl.edu}}

\maketitle

\begin{abstract}
Partial Information Decomposition (PID) represents multivariate mutual information via antichain-lattice that aims to specify which source groups can recover which informational components of a target.
For three or more sources, widely desired PID axioms become mutually incompatible. This is often treated as an axiomatic tuning issue. 
This paper argues that the obstruction is representational, rooted in the antichain indexing itself, so that purely axiomatic adjustments within an antichain-lattice structure cannot resolve it in general.
We first introduce System Information Decomposition (SID) for the special target-free three-variable setting, obtaining a self-consistent entropy decomposition with an operational redundancy definition.
More fundamentally, we then show that for general multivariate PID, there is no universal rule that recovers the decomposed mutual information from the antichain-indexed information atoms. 
In particular, two systems can share identical atoms regardless of any axioms while having different mutual~information.
These results reveal the limits of antichain-lattice and motivate relation-based foundations for multivariate information measures.
\end{abstract}
\pagestyle{empty}

\section{Introduction}
Understanding how information is distributed across multiple random variables is central to information theory.
Partial Information Decomposition (PID), introduced by Williams and Beer~\cite{williams2010nonnegative}, provides a framework for addressing this question by decomposing the multivariate mutual information $I(\boldS;T)$ between a set of source variables $\boldS=\{S_1,\dots,S_N\}$ and a target variable $T$ into \emph{information atoms} such as redundant, unique, and synergistic information, indexed by an antichain (redundancy) lattice~\cite{crampton2001completion}.
% This lattice-based PID has proved conceptually powerful and has enabled a wide range of applications~\cite{schneidman2003synergy,hamman2023demystifying}, including quantifying higher-order interactions in neuroscience~\cite{newman2022revealing,varley2023partial} and complex systems~\cite{rosas2020reconciling}, as well as studying privacy and fairness in data disclosure~\cite{rassouli2019data}.
This lattice-based PID has proved conceptually powerful and has enabled a growing range of applications~\cite{hamman2023demystifying}, 
including quantifying neural interactions~\cite{varley2023partial,newman2022revealing},
formalizing causal emergence in complex systems~\cite{rosas2020reconciling,mediano2022greater,yuan2024emergence},
guiding multimodal fusion in machine learning~\cite{liang2023quantifying}.

Despite extensive efforts~\cite{griffith2014intersection, ince2017measuring, harder2013bivariate, bertschinger2014quantifying, lyu2024explicit,bertschinger2013shared}, no existing PID measure simultaneously satisfies all axioms and desired properties. 
A key obstacle is that, for three or more sources, widely desired axioms and properties cannot in general be satisfied simultaneously, as highlighted by several inconsistency and impossibility results in the literature~\cite{bertschinger2013shared,rauh2014reconsidering,kolchinsky2022novel, matthias2025novel, lyu2026multivariate}.
In particular, some results show that the axioms in~\cite{williams2010nonnegative} may violate the intuitive independent identity property~\cite{finn2018pointwise} (see Property~\ref{property: Independent Identity} in Section~\ref{sec:PID}), while the XOR construction in~\cite{rauh2014reconsidering} (see Lemma~\ref{lemma: counter example} in Section~\ref{sec:PID}) reveals that the sum of all atoms may exceed the total information.

% Some research shows the axioms in~\cite{williams2010nonnegative} may violate an intuitive property called {independent identity}~\cite{finn2018pointwise} (see Property~\ref{property: Independent Identity} in Section~\ref{sec:PID}), some shows the axioms may conflict with the inclusion-exclusion principle~\cite{kolchinsky2022novel}. The XOR construction~\cite{rauh2014reconsidering} (see Lemma~\ref{lemma: counter example} in Section~\ref{sec:PID}) reveals that the summation of all atoms may exceed the total information.

Rather than further refining which axioms can or cannot be jointly satisfied, this paper argues that a substantial part of the multivariate PID difficulty is not axiomatic but \emph{representational}:
it is rooted in the lattice itself.
The redundancy antichain-lattice~\cite{crampton2001completion,williams2010nonnegative} is designed to index atoms by which subsets of sources can recover a given informational component about the target.
It naturally encourages a set-theoretic accounting intuition: such patterns can be organized into disjoint atoms whose contributions aggregate in a universal additive manner, often expressed as the \emph{whole-equals-sum-of-parts} (WESP) principle.
However, we show that synergy can can \emph{link} information atoms in ways that the antichain-indexed lattice cannot represent.
This motivates a structural question independent of any particular redundancy formula or axiom set:
\emph{can antichain-indexed atoms universally determine the quantity being decomposed?}
Our main result is negative: the obstruction persists even before choosing axioms—it arises from the limitations of lattice representation capabilities.

Our contributions are as follows.
First, to resolve the multivariate PID inconsistency in a tractable setting and to probe its origin, we introduce the notion of \emph{System Information Decomposition} (SID) for the three-variable case where $T=(S_1,S_2,S_3)$.
In this boundary case, SID provides a compatible axiomatic system with an operational redundancy definition and yields a self-consistent decomposition. More importantly, it shows that higher-order synergy can take a collective form that is not representable by antichain labels alone.
Second and most importantly, we establish a representational impossibility result for general multivariate PID:
for three or more sources, antichain-indexed atoms are not informative enough to determine the decomposed quantity.
In particular, we show that two systems can induce identical antichain-indexed atoms while having different mutual information $I(\boldS;T)$.
Together, these results indicate that the multivariate PID obstruction is not primarily a matter of selecting the ``correct'' axioms but a limitation of the structural representation, motivating alternatives beyond antichain-lattice.

The remainder of the paper is organized as follows.
Section~\ref{sec:PID} reviews PID and recalls a three-source inconsistency.
Section~\ref{sec:SID} presents SID as a self-consistent boundary case and derives its decomposition rules with an operational definition via multivariate \textit{Gacs-Korner} common information.
Section~\ref{sec:limitations} proves the main representational limitation of the lattice via an impossibility theorem and an indistinguishable-pair construction.
Section~\ref{sec:discussion} discusses implications and motivates relation-based foundations for multivariate information measures. 
\ifthenelse{\boolean{showAppendix}}{}{A full version of this paper, including appendices and complete proofs, is available in~\cite{lyu2026structural}.}

\section{Partial Information Decomposition}
\label{sec:PID}

In this section we briefly review the PID of Williams and Beer~\cite{williams2010nonnegative} and recall a three-source inconsistency result. 
\subsection{PID framework and redundancy lattice}
Consider random variables $S_1, S_2, T$ over finite alphabets $\mathcal{S}_1, \mathcal{S}_2, \mathcal{T}$. We denote $S_1$ and $S_2$ as the sources and $T$ as the target. The mutual information~$I(S_1,S_2;T)$ decomposes into redundant, unique, and synergistic atoms (see Figure~\ref{fig:PID}):
{\setlength{\abovedisplayskip}{4pt}
\setlength{\belowdisplayskip}{4pt}
\begin{align}
\label{equ:Information Atoms' relationship_1}
    I(S_1,S_2;T) = \operatorname{Red}(S_1,S_2 \to T) +\operatorname{Syn}(S_1,S_2 \to T) \nonumber\\+\operatorname{Un}(S_1\to T\vert S_2)+ \operatorname{Un}(S_2\to T\vert S_1),
\end{align}
}
where $\operatorname{Red}(S_1,S_2 \to T)$ is redundant information shared by $S_1$ and $S_2$ about $T$, $\operatorname{Un}(S_1 \to T \mid S_2)$ and $\operatorname{Un}(S_2 \to T \mid S_1)$ are unique information from each source, and $\operatorname{Syn}(S_1,S_2 \to T)$ is synergistic information that is only available from the joint observation of $S_1$ and $S_2$.

For each subsystem $(S_1,T)$ and $(S_2,T)$, the atoms satisfy
{\setlength{\abovedisplayskip}{4pt}
\setlength{\belowdisplayskip}{4pt}
\begin{align}\label{equ:Information Atoms' relationship_2}
    I(S_1;T) &= \operatorname{Red}(S_1,S_2 \to T) + \operatorname{Un}(S_1\to T\vert S_2), \mbox{ and}\nonumber\\
    I(S_2;T) &= \operatorname{Red}(S_1,S_2 \to T) + \operatorname{Un}(S_2\to T\vert S_1).
\end{align}}
\begin{figure}[htbp]
\centering
\fbox{\includegraphics[width=.6\linewidth]{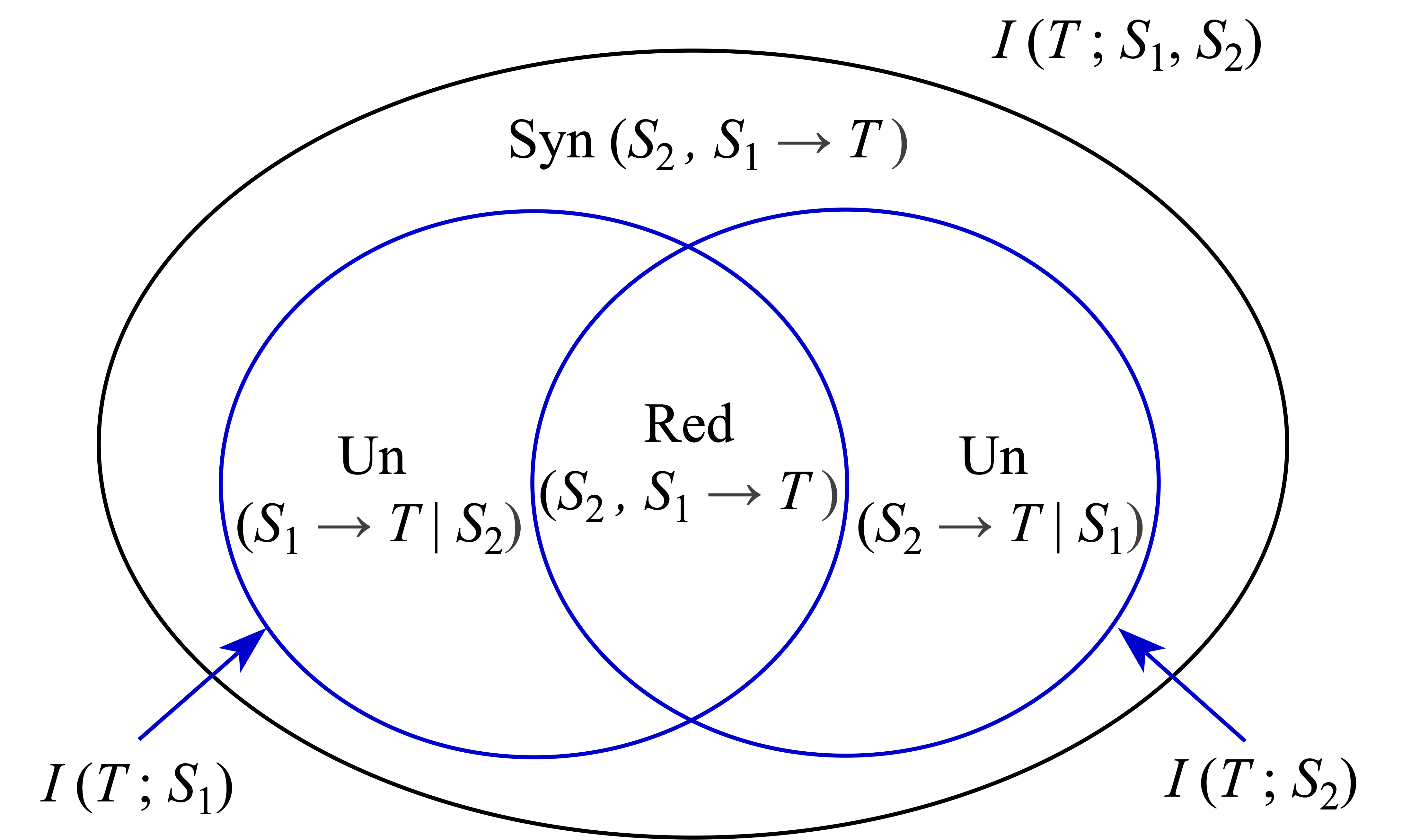 }}
\caption{The structure of PID with two source variables, i.e.,~\eqref{equ:Information Atoms' relationship_1}~\eqref{equ:Information Atoms' relationship_2}.}
\label{fig:PID}
\end{figure}
For general systems with source variables $\boldS = \{S_1,\dots,S_n\}$ and target $T$, PID uses \textit{the redundancy lattice}~$\mathcal{A}(\boldS)$~\cite{williams2010nonnegative,crampton2001completion}, which is the set of antichains formed from the power set of $\boldS$ under set inclusion with a natural order~$\preceq_\boldS$.
\begin{definition}[PID Redundancy Lattice]
\label{definition:PID lattice}
For the set of source variables $\boldS$, the set of antichains is:
{\setlength{\abovedisplayskip}{4pt}
\setlength{\belowdisplayskip}{4pt}
\begin{align*}
\mathcal{A}(\boldS) = \{\alpha \subseteq \mathcal{P}(\boldS)\setminus\{\varnothing\}:\alpha \ne \varnothing, \forall \boldA_i,\boldA_j \in \alpha, \boldA_i \not \subset \boldA_j\},
\end{align*}
}
where $\mathcal{P}(\boldS) $ is powerset of $\boldS$,
and for every~$\alpha,\beta\in\mathcal{A}(\boldS)$, $\beta \preceq_\boldS \alpha$ if for every $\boldA\!\in\!\alpha$ there exists $\boldB\!\in\! \beta$ such that $\boldB\!\subseteq\!\boldA$.
\end{definition}
For ease of exposition, we denote elements of~$\mathcal{A}(\boldS)$ using their indices (e.g., write $\bigl\{\{S_1\}\{S_2\}\bigl\}$ as $\bigl\{\{1\}\{2\}\bigl\}$).
Based on the redundancy lattice, PID assigns a real value to each antichain $\alpha\in\cA(\boldS)$ by a family of functions.

\begin{definition} [Partial Information Decomposition Framework]
\label{pid def:PIDF}
Let~$\boldS$ be a collection of sources and let~$T$ be the target.
A family of functions $\{\Pi^{T}_{\boldA}:\mathcal{A}(\boldA) \rightarrow \mathbb{R}\}_{\boldA \subseteq \boldS}$ is called a family of partial information (PI) functions if it satisfies~PID Axioms~\ref{pid axiom:mutual constrains},~\ref{PID Axiom: Commutativity},~\ref{pid axiom: Monotonicity}, and~\ref{pid axiom: Self-redundancy}, given shortly.
\end{definition}
For simplicity we denote $\Pi^{T}_{i\dots}(\cdot)$ for $\Pi^{T}_{\{S_i\dots\}}(\cdot)$, e.g., $\Pi^{T}_{12}(\{\{1\}\})=\Pi^{T}_{\{S_1,S_2\}}(\{\{S_1\}\})$.
Note that in the case~$\boldS=\{S_1,S_2\}$, the terms in~\eqref{equ:Information Atoms' relationship_1} and~\eqref{equ:Information Atoms' relationship_2} reduce Definition~\ref{pid def:PIDF} to
{\setlength{\abovedisplayskip}{4pt}
\setlength{\belowdisplayskip}{4pt}
\begin{align*}
    \operatorname{Red}(S_1,\!S_2\!\to\!T)\! & =\! \Pi^{T}_{12}(\!\bigl\{\!\{\!1\!\}\!\{\!2\!\}\!\bigl\}\!), \operatorname{Un}(S_1\!\to\! T\vert S_2\!) \!=\! \Pi^{T}_{12}(\!\bigl\{\!\{1\}\!\bigl\}\!),\\
    \operatorname{Syn}(S_1,S_2 \!\to\! T)\! &=\! \Pi^{T}_{12}(\!\bigl\{\!\{12\}\!\bigl\}\!), 
    \operatorname{Un}(S_2\!\to\! T\vert S_1)\! =\! \Pi^{T}_{12}(\!\bigl\{\!\{2\}\!\bigl\}\!).
\end{align*}}
For general systems, each~$\boldA\subseteq \boldS$ and every~$\alpha\in\cA(\boldA)$, the value $\Pi^{T}_{\boldA}(\alpha)$ is called a \textit{PI-atom}.
Intuitively, the PI-atom $\Pi^{T}_{\boldA}(\alpha)$ measures the amount of information provided by each set in the antichain~$\alpha$ to~$T$ and is not attributable to any~$\beta\ne \alpha$ s.t. $\beta \preceq_\boldA \alpha$. To ensure that a PI-function realizes this intended principle, the PID framework imposes a set of structural axioms.
First, it requires the following mutual-information constraints~\cite{williams2010nonnegative} (i.e., the equivalent of~\eqref{equ:Information Atoms' relationship_1} and~\eqref{equ:Information Atoms' relationship_2}).

\begin{pid axiom}[{Whole Equals Sum of Parts}]
\label{pid axiom:mutual constrains}
  For any subsets $\boldA$, $\boldB$ of sources~$\boldS$ with $\boldB\subseteq \boldA$, the sum of PI-atoms decomposed from system $\boldA$ satisfies  
{\setlength{\abovedisplayskip}{4pt}
\setlength{\belowdisplayskip}{4pt}
\begin{align}
\label{equ:PID Information Atoms}
    I(\boldB;T) = \sum_{\beta \preceq_{\boldA} \{\boldB\} } \Pi^{T}_{\boldA}(\beta),
\end{align}
}
where $\{\boldB\}$ is the antichain with a single element~$\boldB$.
\end{pid axiom}
{Equation~\eqref{equ:PID Information Atoms} requires that, for any subsystem $(\boldA,T)$, the mutual information $I(\boldB;T)$ can be recovered by summing the appropriate PI-atoms~\cite{ince2017partial,chicharro2017synergy,rosas2020operational,lizier2013towards}. We refer to this as the \emph{whole-equals-sum-of-parts (WESP)} principle.

Then, PID requires these axioms on the redundancy atom, which further restrict the resulting decomposition.

\begin{pid axiom} [Commutativity]
\label{PID Axiom: Commutativity}
Redundant information is invariant under any permutation $\sigma$ of sources, i.e., $\operatorname{Red}(S_1,\dots,S_N\to T) = \operatorname{Red}(S_{\sigma(1)}, \dots, S_{\sigma(N)}\to T)$.
\end{pid axiom}

\begin{pid axiom} [Monotonicity]
\label{pid axiom: Monotonicity}
Redundant information decreases monotonically as more sources are included, i.e., 
$\operatorname{Red}(S_1,\dots,S_{N},S_{N+1} \to T) \le \operatorname{Red}(S_1,\dots,S_{N} \to T)$.
\end{pid axiom}

\begin{pid axiom} [Self-redundancy]
\label{pid axiom: Self-redundancy}
Redundant information for a single source variable $S_i$ equals the mutual information, i.e., 
$\operatorname{Red}(S_i \to T) = I(S_i;T).$
\end{pid axiom}
Besides, another intuitive property is often considered \cite{ince2017measuring}.
\begin{property}[Independent Identity]
\label{property: Independent Identity}
If $I(S_1;S_2)=0$ and $T=(S_1,S_2)$, then $\operatorname{Red}(S_1,S_2\to T) = 0$.
\end{property}

\subsection{Inconsistency for three or more sources}
An explicit definition for PI-functions for two sources was given in~\cite{lyu2024explicit}.
However, this framework becomes inherently contradictory for three or more source variables, as shown in \cite[Thm.~2]{rauh2014reconsidering}, which we briefly recall below. For completeness, \ifthenelse{\boolean{showAppendix}}{Appendix~\ref{app:counter example}}{Appendix B1 of~\cite{lyu2026structural}}
provides a proof following~\cite{rauh2014reconsidering}.
\begin{lemma} {\cite{rauh2014reconsidering}}
\label{lemma: counter example}
    Let~$S_1$ and~$S_2$ be two independent $\operatorname{Bernoulli}(1/2)$ variables, let~$S_3$ be their exclusive OR (XOR), and let~$T=(S_1,S_2,S_3)$. 
    Then, any PID measure
    satisfies PID Axioms~\ref{PID Axiom: Commutativity}, \ref{pid axiom: Monotonicity}, \ref{pid axiom: Self-redundancy}, Property~\ref{property: Independent Identity},
    violates PID Axiom~\ref{pid axiom:mutual constrains}
    i.e.,
    {\setlength{\abovedisplayskip}{4pt}
\setlength{\belowdisplayskip}{4pt}
\begin{align*}
I(T;\boldS) < \sum_{\beta \preceq_{\boldS} \{\boldS\} } \Pi^{T}_{\boldS}(\beta),
\end{align*}}
where $I(T;\boldS)=2$, but three non-zero atoms have value $1$.
\end{lemma}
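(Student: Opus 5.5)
We will follow the argument of Rauh et al.\ and assume, for contradiction, that a PID measure satisfying PID Axioms~\ref{PID Axiom: Commutativity}, \ref{pid axiom: Monotonicity}, \ref{pid axiom: Self-redundancy}, Property~\ref{property: Independent Identity} and PID Axiom~\ref{pid axiom:mutual constrains} exists for the XOR system. First we record the elementary entropies. Any two of $S_1,S_2,S_3$ are independent uniform bits and determine the third. Hence $I(S_i;T)=H(S_i)=1$, $I(S_i,S_j;T)=2$ for $i\neq j$, and $I(\boldS;T)=H(T)=2$. The key observation is that every pair $(S_i,S_j)$ with $i\ne j$ satisfies $I(S_i;S_j)=0$. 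Moreover, because $T$ is a bijective function of $(S_i,S_j)$, the target is ``$T=(S_i,S_j)$'' up to relabeling.

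Next we fix the two-source subsystems $\boldA=\{S_i,S_j\}$. Property~\ref{property: Independent Identity} gives $\operatorname{Red}(S_i,S_j\to T)=0$; for this we need redundancy to be invariant under an invertible relabeling of $T$, which we will state as implicit in the PID setup. Then PID Axiom~\ref{pid axiom:mutual constrains} on $\boldA$ gives $\operatorname{Un}(S_i\to T\mid S_j)=1$ and $\operatorname{Syn}=0$. By Monotonicity, $\operatorname{Red}(S_1,S_2,S_3\to T)\le 0$, so this atom is $0$ (nonnegativity or a lower bound via the lattice sums will be used if needed). Commutativity lets us treat all pairs symmetrically.

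Then we move to the full lattice $\cA(\boldS)$ and identify the atoms from the WESP constraints \eqref{equ:PID Information Atoms} for $\boldB=\{S_i\}$ and $\boldB=\{S_i,S_j\}$, viewed inside $\boldA=\boldS$. The pairwise redundancy $\Pi$-sum below $\bigl\{\{i\}\{j\}\bigr\}$ equals $\operatorname{Red}(S_i,S_j\to T)=0$; this is the standard identification of redundancy values with cumulative sums on the lattice. With nonnegativity of the atoms in that down-set, this forces every atom below $\bigl\{\{i\}\{j\}\bigr\}$ to vanish. Self-redundancy together with $I(S_i;T)=1$ then puts mass $1$ on the node $\bigl\{\{i\}\{jk\}\bigr\}$ or on $\bigl\{\{i\}\bigr\}$. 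Using $I(S_i;T\mid S_j)$-type constraints, namely that $\{S_j,S_k\}$ already recovers all of $T$, we will show that the atom carrying $S_i$'s bit must be $\bigl\{\{i\}\{jk\}\bigr\}$, each with value $1$. Summing the three atoms $\bigl\{\{1\}\{23\}\bigr\},\bigl\{\{2\}\{13\}\bigr\},\bigl\{\{3\}\{12\}\bigr\}$ gives $3>2=I(\boldS;T)$, which contradicts WESP for $\boldB=\boldS$.

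The main obstacle is this last localization step: proving that each unit of $S_i$'s information must sit at $\bigl\{\{i\}\{jk\}\bigr\}$ and not at $\bigl\{\{i\}\bigr\}$. Here we will use the WESP equation for $\boldB=\{S_j,S_k\}$, which must equal $2$ using only atoms below $\bigl\{\{jk\}\bigr\}$. Any mass on $\bigl\{\{i\}\bigr\}$ lies outside that down-set. Combining this with the totals for $\{S_i,S_j\}$ and the vanishing of the pairwise-redundancy atoms should pin down the placement.
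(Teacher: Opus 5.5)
Up to the identity $\Pi(\{\{i\}\})+\Pi(\{\{i\}\{jk\}\})=1$ for each $i$ (writing $\Pi$ for $\Pi^{T}_{123}$), you essentially follow the paper. That is: Independent Identity on each pair, Monotonicity plus nonnegativity of the redundancy atom, the cumulative-sum identification $\Pi^{T}_{ij}(\{\{i\}\{j\}\})=\Pi(\{\{1\}\{2\}\{3\}\})+\Pi(\{\{i\}\{j\}\})$, and WESP for $\{S_i\}$. You do not even need nonnegativity of $\Pi(\{\{i\}\{j\}\})$, since the triple atom is already $0$.

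The step that fails is the one you call the main obstacle. Write out WESP for $\boldB=\{S_j,S_k\}$ inside $\boldS$. After dropping the vanishing atoms and using $\Pi(\{\{j\}\})+\Pi(\{\{j\}\{ik\}\})=\Pi(\{\{k\}\})+\Pi(\{\{k\}\{ij\}\})=1$, it reduces to
\begin{align*}
&\Pi(\{\{i\}\{jk\}\})+\Pi(\{\{jk\}\})+\Pi(\{\{ij\}\{ik\}\{jk\}\})\\
&\quad+\Pi(\{\{ij\}\{jk\}\})+\Pi(\{\{ik\}\{jk\}\})=0.
\end{align*}
The two bits of $(S_j,S_k)$ are already used up by the bits of $S_j$ and $S_k$, so this equation leaves no room for $S_i$'s bit in the down-set of $\{\{jk\}\}$. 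With the atom nonnegativity you already invoke, it forces $\Pi(\{\{i\}\{jk\}\})=0$ and $\Pi(\{\{i\}\})=1$. That is the opposite of the placement you want. Fed into WESP for $\boldS$, it does give a contradiction, but not through the three atoms you sum.

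Without nonnegativity the placement is not determined at all. Consider two assignments, with all other atoms $0$ and two-source atoms $\operatorname{Red}=0$, $\operatorname{Un}=1$, $\operatorname{Syn}=0$ in both:
\begin{itemize}
\item $\Pi(\{\{i\}\})=1$ for all $i$ and $\Pi(\{\{123\}\})=-1$;
\item $\Pi(\{\{i\}\{jk\}\})=1$ for all $i$ and $\Pi(\{\{12\}\{13\}\{23\}\})=-1$.
\end{itemize}
Each satisfies every WESP equation, the cross-scale identities of Corollary~\ref{corollary: two result}, and the axioms as stated. Putting $S_i$'s bit at $\{\{i\}\{jk\}\}$ because $(S_j,S_k)$ recovers it is the first-principle semantics of Lemma~\ref{lem:canonical-atom-assignment}, not a consequence of the PID axioms. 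It is exactly the overcounting behind the inconsistency.

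The localization is also unnecessary. As in the paper, sum $\Pi(\{\{i\}\})+\Pi(\{\{i\}\{jk\}\})=1$ over $i$ to get $3$. WESP for $\boldB=\boldS$ writes $I(\boldS;T)=2$ as this sum plus the atoms indexed by $\{\{12\}\{13\}\{23\}\}$, the three $\{\{ij\}\{ik\}\}$, the three $\{\{ij\}\}$, and $\{\{123\}\}$. The contradiction needs these remaining atoms to be nonnegative (local positivity, as in Rauh et al.). The paper's ``$\ge$'' uses this silently, and the examples above show it cannot be dropped. So state it explicitly as an assumption on all atoms, rather than only on the down-set of $\{\{i\}\{j\}\}$.
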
 
The lattice indexes atoms by source-access patterns, and PID framework imposes an additive accounting rule (Axiom~\ref{pid axiom:mutual constrains}) requiring that each system's mutual information be recovered by summing the atoms in the corresponding down-set, i.e., the WESP principle.
But Lemma~\ref{lemma: counter example} shows that for three sources, the XOR relationship among sources leads to overcounting and violates Axiom~\ref{pid axiom:mutual constrains}~\cite{rauh2014reconsidering}.
Motivated by this obstruction, Section~\ref{sec:SID} introduces \emph{System Information Decomposition} (SID) as a three-variable remedy for the case $T = (S_1,S_2,S_3)$.
Here self-consistency is recovered by modifying the summation rule in~\eqref{equ:PID Information Atoms}, rather than enforcing the WESP additivity.

\section{System Information Decomposition}
\label{sec:SID}
In this section, we consider the three-source case $T=(S_1,S_2,S_3)$, a special boundary case we call \textit{System Information Decomposition} (SID), initially explored in~\cite{lyu2023system}.
Here, the PID of $I(S_1,S_2,S_3;T)$ reduces to a decomposition of the joint entropy $H(S_1,S_2,S_3)$.
To avoid the overcounting in Section~\ref{sec:PID}, we replace Axiom~\ref{pid axiom:mutual constrains} by a modified summation rule over a subset of atoms.
We use the following lattice.

\begin{definition}[SID Half Lattice]
\label{definition:SID_half_lattice}
For $\boldS = \{S_1,S_2,S_3\}$, let
{\setlength{\abovedisplayskip}{4pt}
\setlength{\belowdisplayskip}{4pt}
\begin{align}
\label{equation:sid_lattice}
\mathcal{A}^*(\boldS) =& \{\alpha \in \mathcal{A}(\boldS): \exists \boldA_k \in \alpha, |\boldA_k|=1\}, \\
=&\big\{
\,\{\{1\}\{2\}\{3\}\},\{\{1\}\{2\}\},\{\{1\}\{3\}\},\{\{2\}\{3\}\},\nonumber \\
\{\{1\}\{23&\}\},\{\{2\}\{13\}\}, \{\{3\}\{12\}\},\{\{1\}\},\{\{2\}\},\{\{3\}\}\, \big\}.  \nonumber
\end{align}}
where $\preceq_\boldS$ is as in Definition~\ref{definition:PID lattice}.
\end{definition}

Intuitively, $\mathcal{A}^*(\boldS)$ removes antichains that contain no singleton sources.
When $T = \boldS$, these singleton-free patterns do not appear in the chain-rule expansions of $H(\boldS)$, and hence are not needed in this setting.

\begin{definition} [System Information Decomposition Framework]
\label{sid def:SIDF}
A family of functions 
 $\{\Psi_{\boldA} :\mathcal{A^*}(\boldA) \rightarrow \mathbb{R}\}_{\boldA \subseteq \boldS}$ 
is called a family of system information (SI) functions if it satisfies~SID Axioms~\ref{sid axiom:mutual constrains},~\ref{sid axiom: commutativity},~\ref{sid axiom: Monotonicity}, and~\ref{sid axiom: Self-redundancy}, given shortly. 
\end{definition}
For every~$\boldA\subseteq\boldS$ and every~$\alpha\in\cA^*(\boldA)$, the value $\Psi_{\boldA}(\alpha)$ is called a SI-atom. Our aim is to measure the information contributed by every subset in~$\alpha$ to the whole system $\boldA$, which is not already accounted for by any antichain~$\beta \preceq_\boldA \alpha$.

The SID half lattice can be understood as a refinement of the PID redundancy lattice for three sources (Definition~\ref{definition:PID lattice}), by removing all antichains that do not contain any singleton source (see Figure~\ref{fig:SID_3PID}(B)).
See 
\ifthenelse{\boolean{showAppendix}}{Appendix~\ref{app:compairson}}{Appendix~A of~\cite{lyu2026structural}}
for further comparison between SID and two-source PID.

\begin{figure}[htbp]
\centering
% \vspace*{2pt}
\fbox{\includegraphics[width=0.97\linewidth]{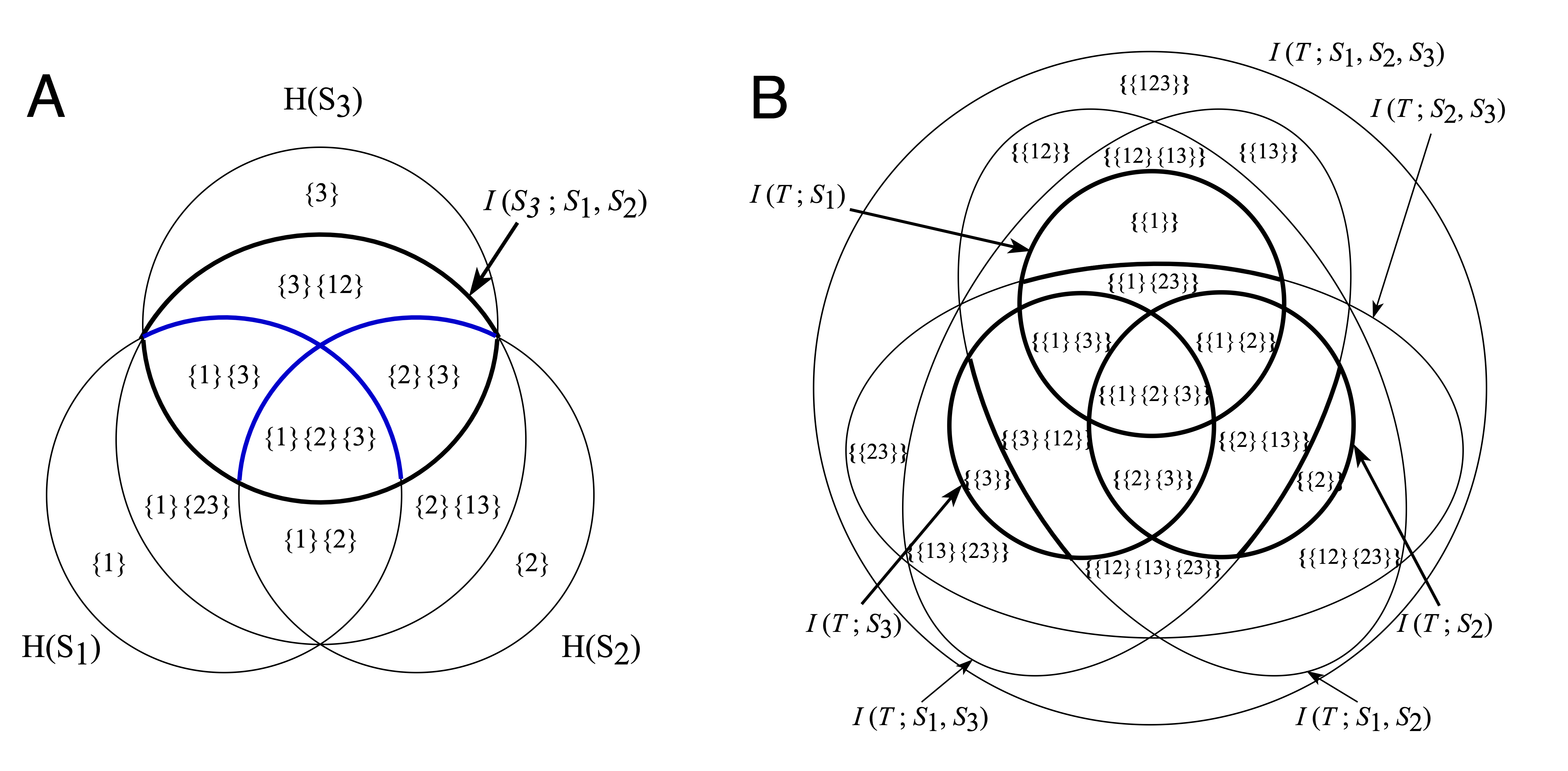 }}
\caption{Comparison between SID and three-source PID. (A) Three-variable SID.
(B) Three-source PID, where the antichains in bold contain at least one singleton source, whose structure is consistent with SID.}
\label{fig:SID_3PID}
\end{figure}
We retain commutativity and monotonicity in analogous form, and adapt self-redundancy to the setting.
PID Axiom~\ref{pid axiom:mutual constrains}, which leads to the inconsistency demonstrated in Lemma~\ref{lemma: counter example}, will be modified shortly.
Similar to PID, we define SID redundant information
as $\operatorname{Red}(S_1,S_2,S_3)=\Psi_{\boldS}(\{\{S_1\}\{S_2\}\{S_3\}\})$, and for all distinct $i,j$ in $\boldS$, let $\operatorname{Red}(S_i,S_j)=\Psi_{\{S_i,S_j\}}(\{\{S_i\}\{S_j\}\})$.

\setcounter{sid axiom}{1}
\begin{sid axiom}[Commutativity]
\label{sid axiom: commutativity}
SID redundant information is invariant under any permutation $\sigma$ of sources, i.e., $    \operatorname{Red}(S_1,S_2,S_3) = \operatorname{Red}(S_{\sigma(1)},S_{\sigma(2)},S_{\sigma(3)})$.
\end{sid axiom}

\begin{sid axiom} [Monotonicity]
\label{sid axiom: Monotonicity}
SID redundant information decreases monotonically as more sources are included, i.e., 
$\operatorname{Red}(S_1,S_2,S_3) \le \min_{i,j\in[3]}\{\operatorname{Red}(S_i,S_j)\}$.
\end{sid axiom}

\begin{sid axiom} [Self-redundancy]
\label{sid axiom: Self-redundancy}
SID redundant information for two variables $S_i,S_j$ equals the mutual information, i.e. $\operatorname{Red}(S_i,S_j) = I(S_i;S_j)$.

\end{sid axiom}

We then revisit PID Axiom~\ref{pid axiom:mutual constrains} and propose the following alternative axiom; see
\ifthenelse{\boolean{showAppendix}}{Appendix~\ref{app:derive SID}}{Appendix~B2 of~\cite{lyu2026structural}}
for the derivation.

\setcounter{sid axiom}{0}
\begin{sid axiom}
\label{sid axiom:mutual constrains}
For any set of variables $\boldS=\{S_1,S_2,S_3\}$ and $\boldB\subseteq\boldA\subseteq\boldS$ with $|\boldB| \le 2$,
the entropy of $\boldB$ is decomposed~as
\begin{align}
H(\boldB)=\sum_{\alpha \in \mathcal{A}^*(\boldA):\, \alpha \preceq_{\boldA} \{\boldB\}} \Psi_{\boldA}(\alpha),    
\end{align}
and when~$|\boldB|=|\boldS|=3$ we have for all distinct $i,j,k \in[3]$,
\begin{align}
\label{equ:SID Information Atoms}
H(\boldS) = \sum_{\alpha \in \mathcal{A}^*(\boldS)} \Psi_{\boldS}(\alpha) - \Psi_{\boldS}(\{\{ij\}\{k\}\}).
\end{align}
\end{sid axiom}

\begin{proposition}[Symmetric synergy from SID Axiom~\ref{sid axiom:mutual constrains}] 
\label{prop:sid_symmetry}
Under SID Axiom~\ref{sid axiom:mutual constrains}, the three pair-to-single SI-atoms coincide:
\begin{align}
\Psi_{\boldS}(\{\{12\}\{3\}\})
=
\Psi_{\boldS}(\{\{13\}\{2\}\})
=
\Psi_{\boldS}(\{\{23\}\{1\}\}).
\label{eq:sid_correction_equal}
\end{align}
\end{proposition}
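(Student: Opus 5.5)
The plan is to read off the claim directly from equation~\eqref{equ:SID Information Atoms} in SID Axiom~\ref{sid axiom:mutual constrains}. The key feature is that the axiom asserts this identity \emph{for all distinct} $i,j,k\in[3]$, not for one particular labelling. The left-hand side $H(\boldS)$ does not depend on the choice of $(i,j,k)$. The sum $\sum_{\alpha\in\mathcal{A}^*(\boldS)}\Psi_{\boldS}(\alpha)$ also does not depend on it, since it ranges over the whole half lattice of Definition~\ref{definition:SID_half_lattice}. The only term that varies with the labelling is the subtracted atom $\Psi_{\boldS}(\{\{ij\}\{k\}\})$.

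Concretely, write $\Sigma=\sum_{\alpha\in\mathcal{A}^*(\boldS)}\Psi_{\boldS}(\alpha)$, a single real number determined by the SI-function $\Psi_{\boldS}$. Instantiate~\eqref{equ:SID Information Atoms} with $(i,j,k)=(1,2,3)$, $(1,3,2)$ and $(2,3,1)$. This gives
\begin{align*}
H(\boldS)=\Sigma-\Psi_{\boldS}(\{\{12\}\{3\}\})=\Sigma-\Psi_{\boldS}(\{\{13\}\{2\}\})=\Sigma-\Psi_{\boldS}(\{\{23\}\{1\}\}).
\end{align*}
Cancelling $\Sigma$, which is finite because all atoms are real-valued, yields~\eqref{eq:sid_correction_equal}. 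Note that $\{\{ij\}\{k\}\}$ is an unordered antichain, so swapping $i$ and $j$ gives the same atom. The three choices of the singleton $k$ therefore exhaust all cases.

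There is no real obstacle here. The only point needing care is interpretive: one must read the quantifier ``for all distinct $i,j,k$'' as imposing the three equations simultaneously. One must also check that $\Sigma$ indeed includes every pair-to-single atom, which it does, since $\{\{1\}\{23\}\}$, $\{\{2\}\{13\}\}$ and $\{\{3\}\{12\}\}$ all lie in $\mathcal{A}^*(\boldS)$. No other SID axiom is needed. In particular, the $|\boldB|\le 2$ part of SID Axiom~\ref{sid axiom:mutual constrains} plays no role.
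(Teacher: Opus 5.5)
Your proof is correct and is the paper's argument. The paper's one-line proof also instantiates SID Axiom~\ref{sid axiom:mutual constrains} for the three choices of the singleton $k$ and notes that only the subtracted atom can vary. You make explicit that $H(\boldS)$ and the full sum over $\mathcal{A}^*(\boldS)$ are common to all three equations and cancel.
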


\begin{proof}
Apply SID Axiom~\ref{sid axiom:mutual constrains} to $(i,j,k)=(1,2,3)$ and its permutations; the exclusion term is permutation-invariant.
\end{proof}

\begin{remark} 
Proposition~\ref{prop:sid_symmetry} shows that the exclusion term in
SID Axiom~\ref{sid axiom:mutual constrains} is permutation-invariant, i.e., the three atoms encode the same symmetric contribution. Consequently, SID does not treat all atoms as universally disjoint parts, and self-consistency excludes exactly one copy of this linked term.
\end{remark}
For the XOR system in Lemma~\ref{lemma: counter example} with $T{=}(S_1,S_2,S_3)$, this yields for any distinct $i,j,k\in[3]$
\[
H(T)=\Psi_{\boldS}(\{\{i\}\{jk\}\})+\Psi_{\boldS}(\{\{j\}\{ik\}\})=2,
\]
where zero-valued SI-atoms are omitted.
This linkage can be accounted for explicitly in the boundary case $|\boldS|=3$ via~\eqref{equ:SID Information Atoms}, but, as shown in the next section, it cannot be captured by antichain-indexed atoms in general multivariate systems.

The following lemma shows that only the redundancy atom needs to be defined; the remaining atoms are then uniquely determined via linear constraints implied by Axiom~\ref{sid axiom:mutual constrains}.
A proof is provided in \ifthenelse{\boolean{showAppendix}}{Appendix~\ref{app:UniqueUpToOneAtom}}{Appendix~B3 of~\cite{lyu2026structural}}
alongside explicit definitions for all SI atoms given~$\operatorname{Red}(S_1,S_2,S_3)$.

\begin{lemma}\label{lemma:UniqueUpToOneAtom}
Let \(\boldS = \{S_1, S_2, S_3\}\) be a three-variable system in the SID framework, and
\(\Psi_{123}(\cdot)\) denote its SI-atoms. 
Then, once the value of any one SI-atom is fixed, the values of all remaining SI-atoms are uniquely determined by SID Axiom~\ref{sid axiom:mutual constrains}.
\end{lemma}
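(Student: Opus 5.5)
The plan is to turn SID Axiom~\ref{sid axiom:mutual constrains} with $\boldA=\boldS$ into an explicit linear system in the ten SI-atoms of $\Psi_{\boldS}$. I will then show that its solution set is an affine line, and that every coordinate varies along that line. Write $R=\Psi_{\boldS}(\{\{1\}\{2\}\{3\}\})$, $P_{ij}=\Psi_{\boldS}(\{\{i\}\{j\}\})$, $C_i=\Psi_{\boldS}(\{\{i\}\{jk\}\})$ and $U_i=\Psi_{\boldS}(\{\{i\}\})$. By Proposition~\ref{prop:sid_symmetry} we have $C_1=C_2=C_3=:C$, which leaves eight unknowns: $R$, $P_{12}$, $P_{13}$, $P_{23}$, $C$, $U_1$, $U_2$, $U_3$.

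The first step is to read off the down-sets in Definition~\ref{definition:PID lattice}. An antichain satisfies $\alpha\preceq_{\boldS}\{\{i\}\}$ iff $\{i\}\in\alpha$. An antichain satisfies $\alpha\preceq_{\boldS}\{\{i,j\}\}$ iff $\alpha$ contains $\{i\}$, $\{j\}$ or $\{ij\}$. This gives, for distinct $i,j,k$,
\begin{align*}
H(S_i)&=R+P_{ij}+P_{ik}+C+U_i,\\
H(S_i,S_j)&=R+P_{12}+P_{13}+P_{23}+3C+U_i+U_j,\\
H(\boldS)&=R+P_{12}+P_{13}+P_{23}+2C+U_1+U_2+U_3,
\end{align*}
where the last line uses \eqref{equ:SID Information Atoms}. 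Altogether this is seven equations in eight unknowns. As a sanity check, subtracting gives $I(S_i;S_j)=R+P_{ij}-C$.

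The key step, and the only real work, is to compute the kernel of the $7\times 8$ coefficient matrix. Suppose first that the perturbation has $\delta R=1$. Differences of the pair equations force $\delta U_1=\delta U_2=\delta U_3=:u$. The mutual-information relations force $\delta P_{ij}=\delta C-1=:p$ for every pair. With $c:=\delta C$, the singleton equations then give $u=1-3c$, the pair equations hold identically, and the full-entropy equation gives $1-4c=0$. Hence the perturbation is uniquely determined:
\[
(\delta R,\delta P_{ij},\delta C,\delta U_i)=\left(1,-\tfrac34,\tfrac14,\tfrac14\right).
\]
If instead $\delta R=0$, the same steps give $p=c$ and $u=-3c$, and the full equation gives $-4c=0$, so the perturbation is zero. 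The kernel is therefore one-dimensional and spanned by the vector above.

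The main point to check is that every entry of this spanning vector is nonzero, which it is. So fixing any single atom ($R$, any $P_{ij}$, any $C_i$, or any $U_i$) selects a unique point on the solution line, and all remaining SI-atoms are determined. Explicit formulas in terms of $R$ follow by back-substitution: $C=\tfrac14(H(\boldS)-\dots)$ is found from the full equation, and then $P_{ij}=I(S_i;S_j)-R+C$ and $U_i$ follow from the singleton equations. I expect no obstacle beyond careful bookkeeping of which antichains lie in each down-set.
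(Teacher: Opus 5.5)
Your approach matches the paper's in spirit: both build a linear system from SID Axiom~\ref{sid axiom:mutual constrains} and solve it. Your kernel-support check is also exactly what an ``any one atom'' claim needs; the paper only back-solves from $\operatorname{Red}$. The problem is that your linear system is not the paper's, and the difference sits precisely in the step your conclusion rests on. You put $C_k=\Psi_{\boldS}(\{\{k\}\{ij\}\})$ into $H(S_i,S_j)$ because $\{\{k\}\{ij\}\}\preceq_{\boldS}\{\{ij\}\}$. That is what the displayed formula in SID Axiom~\ref{sid axiom:mutual constrains} literally says. However, the paper's proof (rows 4--6 of its $9\times10$ matrix) and its derivation \eqref{eq:two_variables_entropy} in Appendix~\ref{app:derive SID} use $H(S_i,S_j)=\Sigma_{\{i,j\}}$. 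This is the sum over antichains containing $\{i\}$ or $\{j\}$, which omits $C_k$. With that rule, subtracting the pair row for $\{i,j\}$ from the full-entropy row that excludes $C_k$ gives $U_k=H(\boldS)-H(S_i,S_j)$ outright. The kernel becomes $(\delta R,\delta P_{ij},\delta C,\delta U_i)=(1,-1,1,0)$, consistent with \eqref{equation:explcitSIatomsGivenRED}. Your decisive check, ``every entry of the spanning vector is nonzero,'' then fails. The singleton atoms are pinned independently of $R$, so fixing one of them determines nothing. Under the paper's accounting, the lemma holds only when the fixed atom is $R$, some $P_{ij}$, or some $C_i$.

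If you keep the literal reading instead, your argument is internally sound, but it decomposes entropy differently from the rest of the paper. Your system gives $P_{ij}=I(S_i;S_j)-R+C$, $U_k=H(S_k\mid S_i,S_j)+C$, and $C=\tfrac14\bigl(R-H(S_1)-H(S_2)-H(S_3)+H(S_1,S_2)+H(S_1,S_3)+H(S_2,S_3)-H(\boldS)\bigr)$. Note that $H(\boldS)$ enters with a minus sign, unlike in your sketch. For the XOR system with $R=0$ (as Definition~\ref{definition:red} gives), all nine atoms other than $R$ equal $1/4$. This contradicts the paper's XOR accounting $H(T)=\Psi_{\boldS}(\{\{i\}\{jk\}\})+\Psi_{\boldS}(\{\{j\}\{ik\}\})=2$ with every other atom zero. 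It also contradicts the identification $\Psi_{\boldS}(\{\{k\}\})=H(S_k\mid S_i,S_j)$ in Appendix~\ref{app:derive SID}. So you must state which pair-entropy rule you are using. Under the rule the paper actually uses, what you can prove is this: the three singleton atoms are fixed by the entropies, and any one of the remaining seven atoms determines all the others.
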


Therefore, any definition of $\operatorname{Red}(S_1,S_2,S_3)$ implies unique definitions of all SI atoms that automatically satisfy SID Axiom~\ref{sid axiom:mutual constrains}.
To satisfy SID Axioms~\ref{sid axiom: commutativity}, \ref{sid axiom: Monotonicity} and~\ref{sid axiom: Self-redundancy}, we adopt a multivariate form of the Gács-Körner common information\footnote{The Gács-Körner common information is defined as $\operatorname{CI}(S_1, S_2) \triangleq \max_{Q} H(Q), \text{s.t. } H(Q | S_1) = H(Q | S_2) = 0$.}~\cite{gacs1973common}(see, e.g.,~\cite{tyagi2011function}) as the redundancy measure.

\begin{definition}[Operational Definition of Redundancy] 
\label{definition:red}
For system \( S_1, S_2, S_3 \), the redundant information is defined as $\operatorname{Red}(S_i,S_j)\triangleq I(S_i;S_j)$ for all distinct $i , j\in \{1,2,3\}$, and
{\setlength{\abovedisplayskip}{3.5pt}
\setlength{\belowdisplayskip}{3.5pt}
\begin{align*}
\operatorname{Red}(S_1,S_2,S_3)&\triangleq \max_{Q}\{H(Q)\mid\!H(Q|S_i)=0, \forall i \!\in \![3]\},
\end{align*} }
where the maximization is taken over all variables \( Q \) defined over the Cartesian product of the alphabets of \( S_1, S_2, S_3 \).
\end{definition}

Gács-Körner common information was also used in~\cite{griffith2014intersection} to define redundancy in a PID-related context.
The following lemma is proved in 
\ifthenelse{\boolean{showAppendix}}{Appendix~\ref{app: satisfaction}.}{Appendix~B4 of~\cite{lyu2026structural}.}

\begin{lemma}
\label{lemma: satisfaction}
Definition~\ref{definition:red} satisfies SID~Axioms~\ref{sid axiom:mutual constrains},~\ref{sid axiom: commutativity}, \ref{sid axiom: Monotonicity}, and~\ref{sid axiom: Self-redundancy}.
\end{lemma}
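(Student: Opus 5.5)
The plan is to verify the four axioms one at a time, putting most of the effort into SID Axiom~\ref{sid axiom: Monotonicity}. SID Axiom~\ref{sid axiom:mutual constrains} needs no separate argument. By Lemma~\ref{lemma:UniqueUpToOneAtom}, once $\operatorname{Red}(S_1,S_2,S_3)=\Psi_{\boldS}(\{\{1\}\{2\}\{3\}\})$ is fixed by Definition~\ref{definition:red}, all remaining SI-atoms are uniquely determined by the linear constraints of that axiom. I would define them this way, so the axiom holds by construction. The same applies to the two-variable subsystems, where $\Psi_{\{S_i,S_j\}}(\{\{i\}\{j\}\})=I(S_i;S_j)$ forces $\Psi_{\{S_i,S_j\}}(\{\{i\}\})=H(S_i|S_j)$. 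One point needs care: well-definedness of the linear system, i.e. that the constraints for different $(i,j,k)$ in~\eqref{equ:SID Information Atoms} are consistent. Proposition~\ref{prop:sid_symmetry} already records this as the equality of the three pair-to-single atoms. I will take it as part of Lemma~\ref{lemma:UniqueUpToOneAtom}.

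SID Axiom~\ref{sid axiom: commutativity} follows because the constraint set $\{H(Q|S_i)=0,\ \forall i\in[3]\}$ is invariant under permuting indices. Since the feasible sets of $Q$ coincide, the maxima coincide. SID Axiom~\ref{sid axiom: Self-redundancy} holds by definition, since Definition~\ref{definition:red} sets $\operatorname{Red}(S_i,S_j)\triangleq I(S_i;S_j)$.

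The substantive step is SID Axiom~\ref{sid axiom: Monotonicity}, namely $\operatorname{Red}(S_1,S_2,S_3)\le I(S_i;S_j)$ for every pair. Take any feasible $Q$, so that $Q$ is a function of each $S_i$. Then
\[
I(S_i;S_j)\ \ge\ I(Q;S_j)\ =\ H(Q)-H(Q|S_j)\ =\ H(Q).
\]
The inequality is the data-processing inequality, which applies because $Q$ is a deterministic function of $S_i$, so $Q - S_i - S_j$ forms a Markov chain. Maximizing over feasible $Q$ gives the claim for every pair, and hence the bound holds for the minimum. The maximum is attained because, for finite alphabets, feasible $Q$ correspond to partitions coarser than the common-component partition.

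I expect the main obstacle to be rigor in the consistency of SID Axiom~\ref{sid axiom:mutual constrains}, not monotonicity. If Lemma~\ref{lemma:UniqueUpToOneAtom} is not read as guaranteeing existence, I would write out the atoms explicitly in terms of $\operatorname{Red}$ and entropies, using co-information and pairwise mutual informations. I would then check directly that $H(\boldS)$ is recovered by~\eqref{equ:SID Information Atoms} for each choice of $(i,j,k)$. The symmetric synergy atom would be set to the value $H(\boldS)$ minus the other atoms, which should be symmetric since every term is symmetric.
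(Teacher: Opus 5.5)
Your proposal is correct and follows essentially the same route as the paper. Commutativity and self-redundancy are read off Definition~\ref{definition:red}, SID Axiom~\ref{sid axiom:mutual constrains} is obtained by defining the remaining atoms through the linear system of Lemma~\ref{lemma:UniqueUpToOneAtom} (whose explicit solution gives existence, not only uniqueness), and the only difference is in monotonicity: the paper goes through $\operatorname{Red}(S_1,S_2,S_3)\le \operatorname{CI}(S_i,S_j)$ and cites the G\'acs--K\"orner bound $\operatorname{CI}\le I$, whereas your data-processing step $H(Q)=I(Q;S_j)\le I(S_i;S_j)$ simply proves that bound inline.
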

Section~\ref{sec:SID} shows that in the three-variable setting, one can restore consistency by replacing a universal WESP-type summation with the modified entropy rule in SID Axiom~\ref{sid axiom:mutual constrains}.
Importantly, Proposition~\ref{prop:sid_symmetry} already highlights a representational gap: while the antichain-lattice indexes atoms, global accounting may require additional relations among atoms that are not encoded by the antichain itself.
In SID, this missing relation can be supplied explicitly as the symmetric correction term in~\eqref{equ:SID Information Atoms}, but for general multivariate systems, such extra structure cannot be captured by antichain-indexed atoms.

Motivated by this, Section~\ref{sec:limitations} investigates whether the absence of explicit relations among information atoms constitutes a fundamental obstruction to antichain-lattice-based multivariate information decomposition, i.e., whether the antichain-indexed atoms can determine the decomposed quantity in a universal way, in particular $I(\boldS;T)$.

\section{Structural Limitations of Antichain-Lattice}
\label{sec:limitations}
Existing PID approaches typically begin with the antichain lattice and then posit axioms for antichain-indexed information atoms, seeking PI-functions that satisfy those axioms.
In this section, our goal is to evaluate whether the antichain lattice itself is capable of representing and decomposing information.

The approach is as follows. We consider a restricted class of distributions, which we denote as \emph{the antichain-realizable atom model}, such that the values of all antichain-indexed atoms can be derived from an intuitive first principle.
We then construct two multivariate systems belonging to this restricted class, and prove that these random variables have the same atoms for each antichain $\alpha$, and yet different mutual information $I(\boldS;T)$. Hence, this proves that there is no way of defining atoms such that mutual information can be reliably computed from the  atom values alone regardless of any axiom system, or equivalently, that antichain-lattice-based atoms are inadequate for decomposing mutual information.

Recall the standard PID setup. 
Definition~\ref{definition:PID lattice} fixes the antichain lattice $\mathcal{A}(\boldS)$ (ordered by $\preceq_{\boldS}$) as the index set for information atoms.
Each $\alpha\in\mathcal{A}(\boldS)$ is an antichain of source sets (subsets of $[n]$), and the intended principle is:
\begin{remark}[Intuitive First Principle]
In Definition~\ref{pid def:PIDF}, the atom labeled by $\alpha$ is intended to capture the information about $T$ that is recoverable from each source group $\boldB\in\alpha$,
but not already recoverable under any strictly weaker label $\beta\prec_{\boldS}\alpha$.    
\end{remark}
For example, when $\boldS=\{S_1,S_2,S_3\}$ and $\alpha=\{\{1\}\{23\}\}$, the atom labeled by $\alpha$ is meant to capture information about $T$ that one can obtain either from $S_1$ alone or from $(S_2,S_3)$ jointly, but not from $S_2$ or $S_3$ alone.

Based on this intuitive first principle, we focus on a restricted class of distributions constructed from latent components.
In this class, each latent component is designed to be recoverable from exactly the source groups prescribed by one lattice label, so the corresponding atom values are fixed by construction.
We formalize this idealized setting next.
\begin{definition}[Antichain-realizable atom model]
\label{def:ideal-semantics}
Fix random variables $x_1,\ldots,x_m$ and index sets
$J_1,\ldots,J_n,J_T\subseteq[m]$ with $J_T\subseteq \bigcup_{i\in[n]} J_i$.
Define $T\triangleq (x_j)_{j\in J_T}$ and $\boldS=\{S_1,\ldots,S_n\}$ by $S_i \triangleq (x_j)_{j\in J_i}$ for all $i \in[n]$.

We say that $(\boldS,T)$ admits an \emph{antichain-realizable atom model} if (i) for each $j\in[m]$, $H(x_j|T)=0$ implies $j\in J_T$;
(ii) for each $i\in[n]$, the variables $\{x_j:j\in J_i\}$ are mutually independent; and
(iii) for every $j\in J_T$ and every $\boldB\subseteq[n]$, 
 writing $S_{\boldB}\triangleq (S_i)_{i\in\boldB}$,
{\setlength{\abovedisplayskip}{3.5pt}
\setlength{\belowdisplayskip}{3.5pt}
\begin{align*}
\text{either }\ H(x_j\mid S_{\boldB})=0\quad
 \text{or}\quad
I(x_j;S_{\boldB})=0.     
\end{align*}
}
\end{definition}
Definition~\ref{def:ideal-semantics} restricts attention to a very narrow class of constructed distributions, but this is sufficient to obtain the \emph{counterexample proof} we need.
More importantly, in this class, 
the lattice’s \emph{intuitive first principle} uniquely induces the values of antichain-indexed atoms, without invoking any redundancy formula or axiom system.
The next lemma makes this correspondence explicit (proved in 
\ifthenelse{\boolean{showAppendix}}{Appendix~\ref{app:proof_principal_antichain}}{Appendix~C of~\cite{lyu2026structural}}).
\begin{lemma}
\label{lem:canonical-atom-assignment}
Assume $(\boldS,T)$ satisfies Definition~\ref{def:ideal-semantics}.
For each $j\in J_T$ with $H(x_j)>0$, define its recovering sets
{\setlength{\abovedisplayskip}{3.5pt}
\setlength{\belowdisplayskip}{3.5pt}
\begin{align}
    \label{eq:recovering}
    \mathsf{Rec}(x_j)\triangleq\{\boldB\subseteq[n]: H(x_j\mid S_{\boldB})=0\},
\end{align}
}
and additionally, define its corresponding antichain as the set of minimal recovering sets
{\setlength{\abovedisplayskip}{3.5pt}
\setlength{\belowdisplayskip}{3.5pt}
\[
\alpha(x_j)\triangleq \Bigl\{\boldB\in\mathsf{Rec}(x_j):\forall\,\boldC\subsetneq \boldB,\ \boldC\notin\mathsf{Rec}(x_j)\Bigr\}.
\]}
Then, for each $\alpha\in\mathcal{A}(\boldS)$, we have
\[
\Pi^T_{\boldS}(\alpha)=H(U_\alpha), \text{ where } U_\alpha\triangleq \bigl(x_j:\ j\in J_T,\ \alpha(x_j)=\alpha \bigr).
\]
\end{lemma}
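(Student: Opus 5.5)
The plan is to work entirely inside the structure forced by Definition~\ref{def:ideal-semantics}. Under that definition, the intuitive first principle is a statement about which latent components are recoverable from which source groups. So the proof has two tasks: show that the latent components $\{x_j: j\in J_T\}$ split cleanly according to their recovering antichains, and show that the ``information recoverable from each $\boldB\in\alpha$ but not from any $\beta\prec_{\boldS}\alpha$'' is exactly the block $U_\alpha$.

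\textbf{Step 1: each component has a well-defined antichain label.} For $j\in J_T$ with $H(x_j)>0$, the set $\mathsf{Rec}(x_j)$ is upward closed, since adding sources cannot destroy recoverability. It is nonempty: because $J_T\subseteq\bigcup_i J_i$, the index $j$ lies in some $J_i$, so $\{i\}\in\mathsf{Rec}(x_j)$. Its minimal elements therefore form a nonempty antichain $\alpha(x_j)\in\mathcal{A}(\boldS)$, and $\mathsf{Rec}(x_j)$ is exactly the up-set generated by $\alpha(x_j)$. Condition (iii) makes this dichotomous: for any $\boldB\notin\mathsf{Rec}(x_j)$, the source group $S_\boldB$ carries no information about $x_j$ at all.

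\textbf{Step 2: translate the first principle into a statement about labels.} Information recoverable from every group in $\alpha$ corresponds to components $x_j$ with $\alpha\subseteq\mathsf{Rec}(x_j)$. In lattice terms this says $\alpha(x_j)\preceq_{\boldS}\alpha$: every $\boldA\in\alpha$ contains some minimal recovering set. The atom at $\alpha$ is then obtained by removing what is already captured by some strictly weaker label. A component is captured by a strictly weaker label $\beta\prec_{\boldS}\alpha$ exactly when $\alpha(x_j)\preceq\beta\prec\alpha$, which, given Step 1, means $\alpha(x_j)\neq\alpha$. Hence the new content at $\alpha$ consists exactly of the $x_j$ with $\alpha(x_j)=\alpha$, namely $U_\alpha$.

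\textbf{Step 3: quantify the atom, and verify consistency.} We need $\Pi^T_{\boldS}(\alpha)=H(U_\alpha)$, with $\Pi^T_{\boldS}(\alpha)=0$ when $U_\alpha$ is empty. I would formalize the ``recoverable from each $\boldB\in\alpha$'' quantity as a cumulative function
\[
R(\alpha)=H\bigl(x_j: j\in J_T,\ \alpha(x_j)\preceq_{\boldS}\alpha\bigr).
\]
The atoms are then its Möbius inversion over the lattice. If the components $x_j$, $j\in J_T$, are mutually independent, then $R(\alpha)=\sum_{\beta\preceq\alpha}H(U_\beta)$, and Möbius inversion gives $\Pi^T_{\boldS}(\alpha)=H(U_\alpha)$.

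To justify using $R$, I would use conditions (ii) and (iii) to show that $R(\alpha)$ is the largest information about $T$ common to all $S_\boldB$ with $\boldB\in\alpha$, in the Gács–Körner sense. Condition (i) ensures that $T$ is determined by exactly the components in $J_T$, so no other components leak into $T$.

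\textbf{Main obstacle: independence of the target components.} Condition (ii) only gives independence within each $J_i$, while Step 3 needs independence across all of $J_T$, or at least additivity of entropy across the blocks $U_\alpha$. I would first try to derive this from (iii) combined with (ii), arguing componentwise via chain rules. If that fails, I would restrict the claim to additivity of the blocks $U_\alpha$, which is what the later counterexample actually uses, and argue that dependent components with distinct labels would violate the 0/1 dichotomy in (iii).
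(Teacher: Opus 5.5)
\paragraph*{Verdict} Your proposal has a genuine gap in Step 3.

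\paragraph*{What is sound} Steps 1 and 2 are correct and match the paper's argument. These are the upward closure of $\mathsf{Rec}(x_j)$, its description as the up-set of $\alpha(x_j)$, and the identification of $U_\alpha$ as exactly the content recoverable from every group in $\alpha$ but not under any $\beta\prec_{\boldS}\alpha$. Your antisymmetry argument for that last point is, if anything, cleaner than the paper's, which relies on a strict containment $\boldC^\star\subsetneq\boldB^\star$ that need not exist, e.g.\ for $\{\{1\}\{2\}\}\prec_{\boldS}\{\{1\}\}$.

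\paragraph*{The gap} In Step 3 you obtain the value $H(U_\alpha)$ by M\"obius inversion of the cumulative function $R$. That returns $H(U_\alpha)$ only if entropy is additive across the blocks $U_\beta$. Definition~\ref{def:ideal-semantics} does not give this, and it cannot be derived from (ii) and (iii). The paper's own tilde system satisfies (i)--(iii), and its target bits $x_1,x_5,x_9$ carry three distinct labels $\{\{1\}\{23\}\}$, $\{\{2\}\{13\}\}$, $\{\{3\}\{12\}\}$. Yet $x_9=x_1\oplus x_5$. So both of your proposed repairs fail. Your remark that block additivity is what the later counterexample uses is also backwards: the witness pair works precisely because additivity fails in the tilde system.

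\paragraph*{Why the M\"obius formalization conflicts with the statement} The problem is not just a missing lemma. Let $\gamma=\{\{12\}\{13\}\{23\}\}$, the join of the three labels. In the tilde system your $R$ agrees with its hat-system value everywhere except that it is one bit smaller on the up-set of $\gamma$, since $H(x_1,x_5,x_9)=2$ there instead of $3$. M\"obius inversion therefore gives atom $-1$ at $\gamma$, whereas $U_\gamma$ is empty and the lemma asserts $0$. More generally, your atoms always sum to $R(\{\boldS\})=H(T)=I(\boldS;T)$. That would make $I(\boldS;T)$ a function of the atoms, contradicting Theorem~\ref{thm:no-functional-ideal}.

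\paragraph*{How the paper proceeds} The paper's proof never passes through any cumulative or additive quantity. After (P1)--(P2) and the identification of $U_\alpha$ (your Steps 1--2), it directly assigns $\Pi^T_{\boldS}(\alpha)\triangleq H(U_\alpha)$ as the value canonically induced by the first principle. To repair your proof, drop the M\"obius step and make that direct assignment. Proving the identity only under an added independence hypothesis would give a lemma too weak to cover the tilde system, which is exactly the case the paper needs.
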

Lemma~\ref{lem:canonical-atom-assignment} yields a ``ground-truth'' antichain-indexed atoms
$\bigl(\Pi^T_{\boldS}(\alpha)\bigr)_{\alpha\in\mathcal{A}(\boldS)}$.
In particular, the values $\Pi^T_{\boldS}(\alpha)$ do not depend on any auxiliary redundancy definition or axiom choices.
For instance, the XOR construction in Lemma~\ref{lemma: counter example} satisfies Definition~\ref{def:ideal-semantics}, yielding three atoms labeled by $\{\!\{1\}\!\{23\}\!\},\{\!\{2\}\!\{13\}\!\}$ and $\{\!\{3\}\!\{12\}\!\}$ have value~$1$, consistent with~\cite{rauh2014reconsidering}.

From now on we restrict attention to joint distributions that satisfy
Definition~\ref{def:ideal-semantics}, where the antichain-indexed atoms
$\bigl(\Pi^{T}_{\boldS}(\beta)\bigr)_{\beta\in\mathcal{A}(\boldS)}$
are fixed by construction and do not depend on any redundancy formula or axiom choices.

Then, we focus on a crucial problem: the notion of a \emph{decomposition} of $I(\boldS;T)$ into antichain-indexed atoms presupposes 
that $I(\boldS;T)$ be a function of all atoms.
However, the next theorem shows that no such universal reconstruction is possible, even in this idealized setting.
\begin{theorem}
\label{thm:no-functional-ideal}
Let $K$ be the number of antichains in $\mathcal{A}(\boldS)$, where $|\boldS|\ge 3$,
there is no function $f:\mathbb{R}^{K}\to\mathbb{R}$ such that
{\setlength{\abovedisplayskip}{3.5pt}
\setlength{\belowdisplayskip}{3.5pt}
\begin{align}
\label{eq:functional-relation}
I(\boldS;T)
=
f\bigl( (\Pi^{T}_{\boldS}(\beta))_{\beta\in\mathcal{A}(\boldS)} \bigr)
\end{align}}
for all joint distributions $(\boldS,T)$ that satisfy Definition~\ref{def:ideal-semantics}.
\end{theorem}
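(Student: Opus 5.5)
The plan is to exhibit two joint distributions, both satisfying Definition~\ref{def:ideal-semantics}, that have identical antichain-indexed atoms (as computed by Lemma~\ref{lem:canonical-atom-assignment}) but different values of $I(\boldS;T)$. A single such pair rules out any $f$ in~\eqref{eq:functional-relation}, since $f$ would have to take two different values at the same argument.

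\emph{First system.} Take the XOR system of Lemma~\ref{lemma: counter example}. Let $x_1,x_2$ be independent $\operatorname{Bernoulli}(1/2)$ bits and $x_3=x_1\oplus x_2$, with $S_i=x_i$ and $T=(x_1,x_2,x_3)$. The conditions of Definition~\ref{def:ideal-semantics} hold trivially: each $S_i$ is a single variable, and each $x_j$ is either determined by a group $S_{\boldB}$ or independent of it. Each $x_j$ is recovered by $S_j$ alone or by the other two sources jointly, so $\alpha(x_j)=\{\{j\}\{kl\}\}$ with $\{j,k,l\}=[3]$. By Lemma~\ref{lem:canonical-atom-assignment}, the atoms labeled $\{\{1\}\{23\}\},\{\{2\}\{13\}\},\{\{3\}\{12\}\}$ equal $1$ and all others vanish. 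On the other hand, $I(\boldS;T)=H(x_1,x_2,x_3)=2$.

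\emph{Second system.} Decouple the three linked atoms by using fresh randomness for each. For each $j\in[3]$, with $\{j,k,l\}=[3]$, take independent uniform bits $y_j,z_j$ and set $a_j\triangleq y_j\oplus z_j$. Place $a_j$ in $S_j$, $y_j$ in $S_k$, and $z_j$ in $S_l$, and let $T=(a_1,a_2,a_3)$. All nine components are introduced as latent variables, with $J_T$ indexing $a_1,a_2,a_3$.

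We then check Definition~\ref{def:ideal-semantics} for the second system.
\begin{itemize}
\item[(ii)] Within each $S_i$ the three components are one $a_i$ together with one $y$ and one $z$ belonging to the other two triples. These are mutually independent, because the triples $(y_j,z_j,a_j)$ are independent across $j$ and any two of $y_j,z_j,a_j$ are independent.
\item[(iii)] Fix $a_j$. If $j\in\boldB$, or $\{k,l\}\subseteq\boldB$, then $a_j$ is determined by $S_{\boldB}$. Otherwise $S_{\boldB}$ contains at most one of $y_j,z_j$, never $a_j$, and only variables from the other independent triples, so $I(a_j;S_{\boldB})=0$.
\item[(i)] $T$ is independent of every $y_j$ and every $z_j$, so none of these is a function of $T$.
\end{itemize}
Hence $\alpha(a_j)=\{\{j\}\{kl\}\}$, and Lemma~\ref{lem:canonical-atom-assignment} gives exactly the same atom vector as the XOR system. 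Here, however, $I(\boldS;T)=H(a_1,a_2,a_3)=3\neq 2$, which contradicts the existence of $f$.

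\emph{General $n\ge 3$.} Add sources $S_4,\dots,S_n$ with $J_i=\varnothing$, i.e., constants, to both systems. All conditions of Definition~\ref{def:ideal-semantics} remain valid. Adding constant sources does not change which sets recover any target component minimally, so each $\alpha(\cdot)$ is unchanged as an antichain in $\mathcal{A}(\boldS)$, and neither mutual information changes.

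The main obstacle is the careful verification of condition (iii) and the independence requirement (ii) in the second construction. This is the step where a careless choice, for example reusing $y_j$ across triples, would create partial dependence and leave the antichain-realizable class. Using three fully independent triples is intended precisely to avoid that.
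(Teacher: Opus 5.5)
Your proof is correct and follows the paper's overall strategy: exhibit a witness pair with identical atoms under Lemma~\ref{lem:canonical-atom-assignment} but different $I(\boldS;T)$. The pair itself is different, though.

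Your decoupled system is, up to renaming, exactly the paper's $(\hat{\boldS},\hat{T})$. Each triple $(y_j,z_j,a_j)$ is uniform on even-parity vectors, with the target bit in $S_j$ and one helper in each other source, just as $(x_1,x_2,x_3)$, $(x_4,x_5,x_6)$, $(x_7,x_8,x_9)$ are there.

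The difference is the partner with $I(\boldS;T)=2$. You take the plain three-bit XOR system of Lemma~\ref{lemma: counter example}. The paper instead builds $(\tilde{\boldS},\tilde{T})$ on the same nine-bit architecture with the same index sets $J_1,J_2,J_3,J_T$, and removes one degree of freedom by imposing $x_9=x_1\oplus x_5$ via $x_8=x_7\oplus x_1\oplus x_5$.

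Your choice is shorter. Checking Definition~\ref{def:ideal-semantics} for the XOR system is immediate, and the paper already notes after Lemma~\ref{lem:canonical-atom-assignment} that this system yields the three unit atoms. It also ties the theorem directly to Lemma~\ref{lemma: counter example}: the atom vector that overcounts there must be sent to $2$, while your decoupled system forces it to $3$. So the failure is not specific to WESP; any atom-only rule that fixes the XOR system breaks on the decoupled one.

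The paper's pair gives a sharper indistinguishability. Its two systems share the latent index sets, the source alphabets, and $H(S_i)=3$ for every source, and differ only by a relation among target components. It therefore also defeats reconstruction maps that are additionally given source entropies or the latent wiring. Your pair has $H(S_i)=1$ versus $3$, so it would not.

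For the theorem as stated, where $f$ sees only the atoms, this distinction is immaterial. Your extension to $n>3$ with constant sources ($J_i=\varnothing$) is a valid special case of the paper's independent extra sources.
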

We prove Theorem~\ref{thm:no-functional-ideal} by exhibiting two joint distributions that satisfy
Definition~\ref{def:ideal-semantics} with identical atoms
$\bigl(\Pi^T_{\boldS}(\alpha)\bigr)_{\alpha\in\mathcal{A}(\boldS)}$, yet different values of $I(\boldS;T)$.
This rules out any universal reconstruction function of the form~\eqref{eq:functional-relation}.

We now consider two three-source systems
$(\hat{S}_1,\hat{S}_2,\hat{S}_3,\hat{T})$ and $(\tilde{S}_1,\tilde{S}_2,\tilde{S}_3,\tilde{T})$,
depicted in Fig.~\ref{fig:System23}. 
Both systems are constructed from latent Boolean variables $x_1,\dots,x_9$.

In $(\hat{\boldS},\hat{T})$, let $x_1,x_2,x_4,x_5,x_7,x_8\sim\mathrm{Bernoulli}(1/2)$ be mutually independent and let
$
x_3=x_1\oplus x_2, x_6=x_4\oplus x_5, x_9=x_7\oplus x_8.
$
Then, we set
$\hat{S}_1=(x_1,x_4,x_7)$, 
$\hat{S}_2=(x_2,x_5,x_8)$, 
$\hat{S}_3=(x_3,x_6,x_9)$, and
$\hat{T}=(x_1,x_5,x_9)$.

In $(\tilde{\boldS},\tilde{T})$, let $x_1,x_2,x_4,x_5,x_7\sim\mathrm{Bernoulli}(1/2)$ be mutually independent and let
$
x_3=x_1\oplus x_2, x_6=x_4\oplus x_5, x_9=x_1\oplus x_5, x_8=x_7\oplus x_1\oplus x_5,
$
so that $x_9=x_7\oplus x_8=x_1\oplus x_5$ holds by construction.
Then, we set
$\tilde{S}_1=(x_1,x_4,x_7)$, 
$\tilde{S}_2=(x_2,x_5,x_8)$, 
$\tilde{S}_3=(x_3,x_6,x_9)$, and
$\tilde{T}=(x_1,x_5,x_9)$.
\begin{figure}[htbp]
    \centering
    \fbox{\includegraphics[width=.9\linewidth]{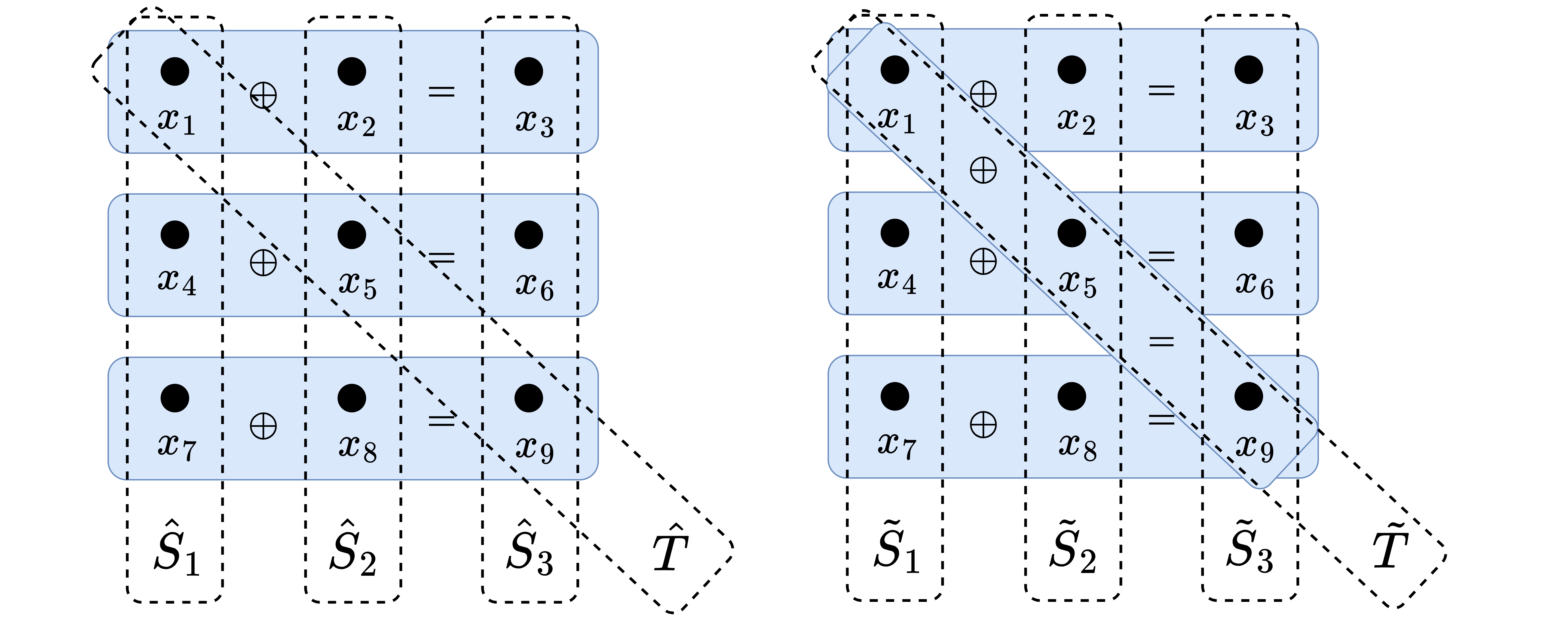}}
    \caption{Three-source systems $(\hat{S}_1,\hat{S}_2,\hat{S}_3,\hat{T})$ and $(\tilde{S}_1,\tilde{S}_2,\tilde{S}_3,\tilde{T})$ constructed from latent bits $x_1$ to $x_9$.}
    \label{fig:System23}
\end{figure}

The system $(\tilde{S},\tilde{T})$ enforces an additional global constraint (equivalently, one fewer latent degree of freedom),
which changes the joint dependence structure and hence the value of $I(\boldS;T)$, while leaving the resulting atoms under Definition~\ref{def:ideal-semantics} unchanged. We formalize this in the following lemma, which proved in 
\ifthenelse{\boolean{showAppendix}}{Appendix~\ref{app:pair-construction}.}{Appendix~D of~\cite{lyu2026structural}.}
% Appendix~\ref{app:pair-construction}.
Explicit probability tables for both systems are provided in \ifthenelse{\boolean{showAppendix}}{Appendix~\ref{app:alltargets}}{Appendix~E of~\cite{lyu2026structural}}.
\begin{lemma}[Witness pair]
\label{lemma:groundtruth-pair}
The systems ($\hat{S},\hat{T}$) and ($\tilde{S},\tilde{T}$) described above satisfy:
\begin{enumerate}
\item their atoms coincide,
$\Pi^{\hat{T}}_{\hat{\boldS}}(\beta)=\Pi^{\tilde{T}}_{\tilde{\boldS}}(\beta), \forall \beta\in\mathcal{A}(\boldS),$
(the atoms indexed by $\{\{1\}\{23\}\}$, $\{\{2\}\{13\}\}$, and $\{\{3\}\{12\}\}$ are $1$, the rest are $0$); and
\item their mutual informations differ: $I(\hat{\boldS};\hat{T})\neq I(\tilde{\boldS};\tilde{T})$.
\end{enumerate}
\end{lemma}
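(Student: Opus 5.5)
The plan is to verify directly that both constructions fall under Definition~\ref{def:ideal-semantics}, read off their atoms with Lemma~\ref{lem:canonical-atom-assignment}, and then compute $I(\boldS;T)$ in closed form. The main tool is a GF(2)-linearity observation. In both systems every latent bit is a linear form over $\mathbb{F}_2$ in independent uniform bits: $(x_1,x_2,x_4,x_5,x_7,x_8)$ in $(\hat{\boldS},\hat{T})$, and $(x_1,x_2,x_4,x_5,x_7)$ in $(\tilde{\boldS},\tilde{T})$. For such forms I will use the standard fact that, for a linear form $\ell$ and a collection $L$ of linear forms, either $\ell\in\operatorname{span}(L)$, so that $H(\ell\mid L)=0$, or $\ell$ is uniform and independent of $L$, so that $I(\ell;L)=0$. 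This immediately gives condition (iii) for every $j\in J_T$ and every $\boldB\subseteq[n]$, including $\boldB=\varnothing$. It also reduces each recoverability question to a span-membership check.

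For conditions (i) and (ii), I will write out the linear forms explicitly. In the hat system, each $\hat S_i$ consists of three forms from disjoint pairs of the base bits, and these are independent. In the tilde system, $\tilde S_2=(x_2,x_5,x_7\oplus x_1\oplus x_5)$ and $\tilde S_3=(x_1\oplus x_2,\,x_4\oplus x_5,\,x_1\oplus x_5)$ each have linearly independent forms, and $\tilde S_1=(x_1,x_4,x_7)$ is trivially independent. Condition (i) requires that no latent outside $J_T=\{1,5,9\}$ lies in the span of the forms of $T$. This holds because $\operatorname{span}(x_1,x_5,x_9)$, which equals $\operatorname{span}(x_1,x_5)$ in the tilde case, contains none of $x_2,x_3,x_4,x_6,x_7,x_8$.

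Next I compute $\alpha(x_j)$ for $j\in\{1,5,9\}$. By monotonicity of $\mathsf{Rec}$, it suffices to test singletons and pairs.
\begin{itemize}
\item $x_1$ lies in the span of $S_1$ and of $S_2\cup S_3$, since $x_2\oplus x_3=x_1$. It does not lie in the span of $S_2$ alone or $S_3$ alone; in the tilde case $x_1$ appears there only masked by $x_7$, $x_2$, or $x_5$ (with $x_5$ itself masked by $x_4$). This gives $\alpha(x_1)=\{\{1\}\{23\}\}$.
\item Symmetrically, $\alpha(x_5)=\{\{2\}\{13\}\}$, using $x_4\oplus x_6=x_5$.
\item $\alpha(x_9)=\{\{3\}\{12\}\}$, using $x_7\oplus x_8=x_9$ in both systems, and checking that $x_9$ is not in the span of $S_1$ or of $S_2$ alone.
\end{itemize}
Each $U_\alpha$ is then a single uniform bit, so Lemma~\ref{lem:canonical-atom-assignment} gives value $1$ on these three antichains and $0$ elsewhere, identically for both systems. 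This proves item~1.

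For item~2, in both systems $T$ is a function of $\boldS$, so $I(\boldS;T)=H(T)$. In the hat system $(x_1,x_5,x_9)$ are independent uniform bits, giving $3$ bits. In the tilde system $x_9=x_1\oplus x_5$, giving $H(\tilde T)=2$. Hence $3\neq 2$.

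I expect the main obstacle to be the tilde-system non-recoverability checks, for example that $x_9$ is independent of $\tilde S_2$ and that $x_1$ is independent of $\tilde S_3$. There the extra constraint couples variables across sources. I will handle these by explicit rank checks: whether $e_\ell$ lies in the row space of the relevant $3\times 5$ matrix over $\mathbb{F}_2$.
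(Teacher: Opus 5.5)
Your proposal is correct and follows the same structure as the paper's proof: verify conditions (i)--(iii) of Definition~\ref{def:ideal-semantics}, identify the minimal recovering sets of $x_1,x_5,x_9$ in both systems, apply Lemma~\ref{lem:canonical-atom-assignment}, and compare $H(\hat T)=3$ with $H(\tilde T)=2$. The only difference is that you replace the paper's case-by-case one-time-pad arguments with a single $\mathbb{F}_2$ span-membership dichotomy. This settles condition (iii) for every $\boldB$ at once, and it turns the joint-independence checks in the tilde system (e.g., $I(x_1;\tilde S_3)=0$ even though $x_3,x_6,x_9$ share latent bits) into explicit rank computations, which is somewhat more rigorous than the paper's informal masking argument.
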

Lemma~\ref{lemma:groundtruth-pair} implies Theorem~\ref{thm:no-functional-ideal} immediately. 
Indeed, if~\eqref{eq:functional-relation} held for some universal $f$, then we get the contradiction
{\setlength{\abovedisplayskip}{3.5pt}
\setlength{\belowdisplayskip}{3.5pt}
\[
I(\hat{\boldS};\hat{T})
\!=\! f\!\left(\!\bigl(\Pi^{\hat{T}}_{\hat{\boldS}}(\beta)\bigr)_{\!\beta\in\mathcal{A}(\hat{\boldS})}\!\right)
\!=\! f\!\left(\!\bigl(\Pi^{\tilde{T}}_{\tilde{\boldS}}(\beta)\bigr)_{\!\beta\in\mathcal{A}(\tilde{\boldS})}\!\right)
\!=\! I(\tilde{\boldS};\tilde{T}).
\]}
The counterexample extends to any $n>3$ by adjoining extra sources that are independent of the current variables in both systems, leaving the atoms and mutual information unchanged.

In summary, we exhibited two systems with identical atoms
$\bigl(\Pi^{T}_{\boldS}(\alpha)\bigr)_{\alpha\in\mathcal{A}(\boldS)}$ but different values of $I(\boldS;T)$.
Therefore, even in the case where the lattice meaning is realized exactly,
$I(\boldS;T)$ is not uniquely determined by the atoms.
This rules out any universal reconstruction map from atoms to $I(\boldS;T)$, and hence rules out
any multivariate information decomposition that relies solely on the antichain-lattice.

\section{Discussion}
\label{sec:discussion}
This work argues that the difficulty of PID is not primarily an issue of axiom selection or redundancy tuning, but a representational limitation of the antichain lattice itself.
As a boundary case, Section~\ref{sec:SID} introduced System Information Decomposition (SID) for three variables.
By replacing WESP with a modified summation rule on a reduced lattice, SID restores self-consistency and shows that higher-order synergy can act as a symmetric collective contribution.
The appearance of a symmetric correction term exposes the core obstruction: correct global accounting may require relations among atoms that are not specified by antichain labels alone.

Section~\ref{sec:limitations} formalizes this obstruction in an idealized setting.
Even when the lattice meaning is realized exactly (via a ground-truth construction), antichain-indexed atoms do not universally determine the decomposed quantity $I(\boldS;T)$,
since they do not encode the cross-atom constraints (Proposition~\ref{prop:sid_symmetry}) or relations among target components (e.g., $\tilde{T}$ in Lemma~\ref{lemma:groundtruth-pair}). 
Consequently, the quantity $I(\boldS;T)$ can vary while the atoms remain unchanged. 
This does not preclude the existence of useful multivariate decompositions, but it indicates that additional structure beyond antichain lattice is essential.

A natural direction is therefore to augment atoms with explicit relations—for example, relation-based representations such as hypergraphs that encode global constraints or higher-order dependencies directly—while retaining operational meaning and computability.

\newpage
\IEEEtriggeratref{14}
%Bibliography
\bibliographystyle{unsrt}
\bibliography{references}

\ifthenelse{\boolean{showAppendix}}{
\newpage
\appendix
\subsection{Comparison between SID and two sources PID}
\label{app:compairson}

\begin{figure}[htbp]
\centering
\fbox{\includegraphics[width=0.97\linewidth]{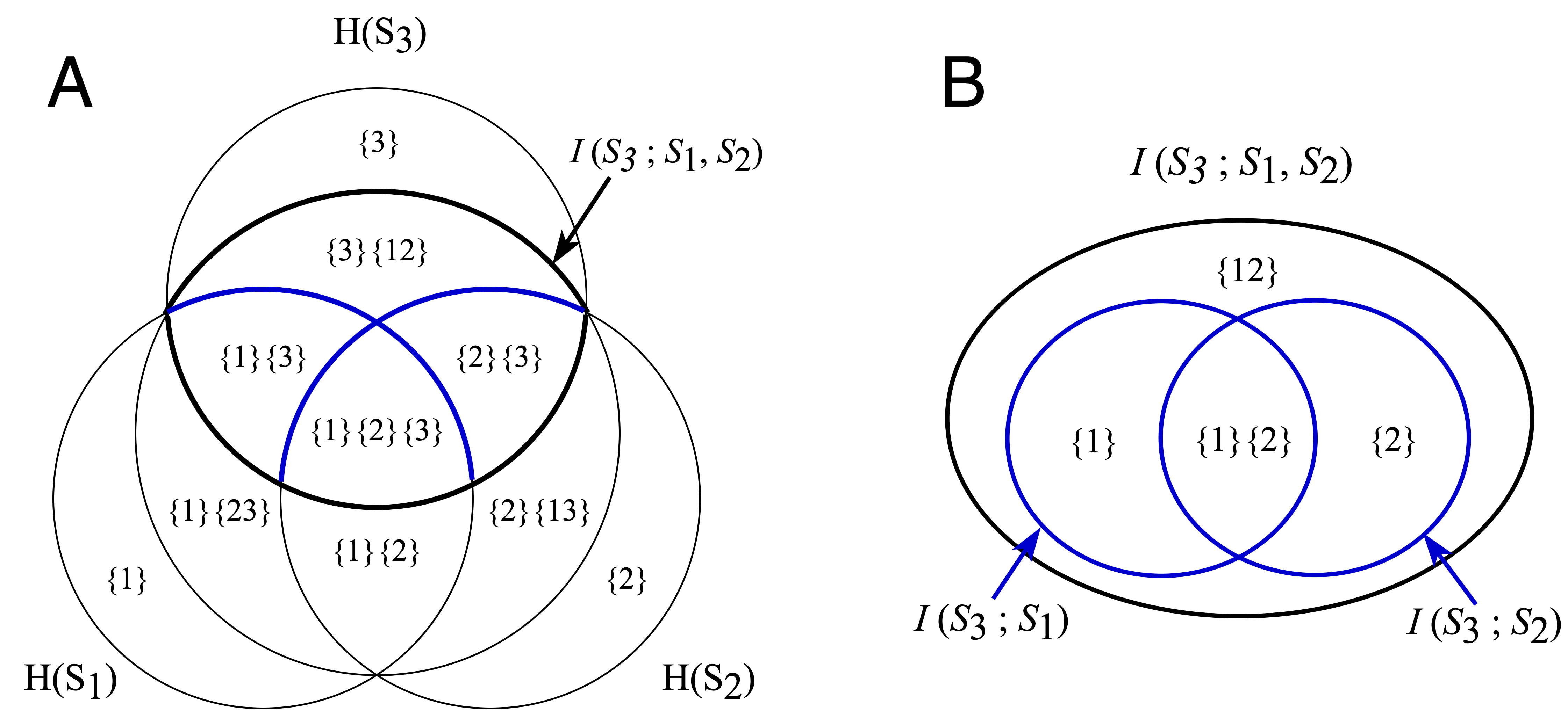 }}
\caption{Comparison between SID and two-source PID.}
\label{fig:SID_2PID}
\end{figure}

SID extends the scope of 2-source PID from mutual information $I(\boldS\setminus S_i;S_i)$ to the joint entropy $H(\boldS)$ of the system. 
In SID (target-free), each SI-atom represents information that a certain combination of variables provides redundantly to the system as a whole.
% Unlike PID, SID is a target-free framework, where each atom represents the information redundantly provided by the sets with the antichain. In 
For instance, in Fig.~\ref{fig:SID_2PID}(A), the SI-atom $\Psi_{123}(\{\{3\}\{12\}\})$ represents information in $S_3$ that is also contributed synergistically by $S_1$ and $S_2$. This directly corresponds to the PI-atom $\Pi_{12}^3(\{\{12\}\})$ in the PID view (Fig.~\ref{fig:SID_2PID}(B)), where we have a target $T=S_3$ and sources $S_1,S_2$.

\subsection{Proofs of Main Results}
To prove the lemmas in the paper, we first need the following lemma and corollary.
%\red{[Subsystem consistency as Lemma~1, with the following two equations as corollaries.]}

Axiom~\ref{pid axiom:mutual constrains} couples decompositions obtained from different subsystems, as captured by the following lemma.}
\begin{lemma}[{Subsystem Consistency}]
\label{lemma: subsystem consistency}
For system with sources $\boldS$ and target $T$ and any $\boldA,\boldB,\boldC\subseteq \boldS$ such that~$\boldA\subseteq\boldC\cap \boldB$, let $\Pi_\boldC^T,\Pi_\boldB^T$ (see Definition~\ref{pid def:PIDF}) which satisfy PID Axiom~\ref{pid axiom:mutual constrains}. Then we have that
    \begin{align}
    \label{equ:subsystem}
        \sum_{\beta \preceq_\boldC \{\boldA\}} \Pi^{T}_{\boldC}(\beta)=\sum_{\beta \preceq_\boldB \{\boldA\}} \Pi^{T}_{\boldB}(\beta).
    \end{align}    
\end{lemma}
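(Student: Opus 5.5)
The plan is to apply PID Axiom~\ref{pid axiom:mutual constrains} twice, once inside each of the two systems $\boldC$ and $\boldB$, so that both sums are identified with the same mutual information $I(\boldA;T)$. The hypothesis $\boldA\subseteq\boldC\cap\boldB$ is exactly what lets the axiom be invoked in both places: it gives $\boldA\subseteq\boldC$ and $\boldA\subseteq\boldB$. Consequently the singleton antichain $\{\boldA\}$ is a valid element of both $\mathcal{A}(\boldC)$ and $\mathcal{A}(\boldB)$, and the down-sets $\{\beta:\beta\preceq_\boldC\{\boldA\}\}$ and $\{\beta:\beta\preceq_\boldB\{\boldA\}\}$ are well defined.

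The steps are as follows.
\begin{enumerate}
\item Apply PID Axiom~\ref{pid axiom:mutual constrains} with the pair $(\boldA,\boldB)$ of that axiom instantiated as $(\boldC,\boldA)$. This is legitimate because $\boldA\subseteq\boldC$, and it yields $I(\boldA;T)=\sum_{\beta\preceq_\boldC\{\boldA\}}\Pi^T_\boldC(\beta)$.
\item Apply the same axiom with the pair instantiated as $(\boldB,\boldA)$, using $\boldA\subseteq\boldB$. This gives $I(\boldA;T)=\sum_{\beta\preceq_\boldB\{\boldA\}}\Pi^T_\boldB(\beta)$.
\item Equate the two right-hand sides, since both equal the same quantity $I(\boldA;T)$. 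This is exactly~\eqref{equ:subsystem}.
\end{enumerate}

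There is essentially no obstacle here. The one point to check is that the axiom is stated for every pair of nested subsets of $\boldS$ and for every member of the PI-function family in Definition~\ref{pid def:PIDF}. That quantifier is what guarantees it applies to both $\Pi^T_\boldC$ and $\Pi^T_\boldB$ simultaneously, and the lemma's hypothesis states precisely that both satisfy it. The degenerate case $\boldA=\varnothing$ would need a convention, but under the intended reading $\boldA$ is nonempty, as antichain elements are.
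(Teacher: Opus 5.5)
Your proposal is correct and matches the paper's proof: both sums are identified with $I(\boldA;T)$ by invoking PID Axiom~\ref{pid axiom:mutual constrains} once in system $\boldB$ and once in system $\boldC$, using $\boldA\subseteq\boldB$ and $\boldA\subseteq\boldC$. Your remark on the degenerate case $\boldA=\varnothing$ is a harmless addition that the paper does not address.
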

\begin{proof}
    Apply PID Axiom~\ref{pid axiom:mutual constrains} with~$\boldA\subseteq\boldB\subseteq\boldS$ and then with~$\boldA\subseteq\boldC\subseteq\boldS$.
    %The result is directly from~\ref{equ:PID Information Atoms}, by applying PID Axiom~\ref{pid axiom:mutual constrains} on $I(\boldA;T)$ with $\boldA\subseteq\boldB$ and $\boldA\subseteq\boldC$.
\end{proof}
Intuitively, Lemma~\ref{lemma: subsystem consistency} states that the total information that the subset $\boldA$ provides about $T$ is \emph{independent} of the subsystem in which it is computed.
%; the decomposition is consistent under marginalization of sources.}
To illustrate this concept, consider the system in Figure~\ref{fig:PID}.
For the atoms decomposed from the system $(S_1,T)$, the quantity $\Pi^{T}_{1}(\bigl\{\{1\}\bigl\})$ reflects the (redundant) information that $S_1$ provides about~$T$.
If we add a source~$S_2$ to this system, this information will be further decomposed into the redundant information $\Pi^{T}_{12}(\bigl\{\{1\}\{2\}\bigl\})$ from $S_1,S_2$, and the unique information $\Pi^{T}_{12}(\bigl\{\{1\}\bigl\})$ only from $S_1$ but not $S_2$. 
Below are three axioms regarding the redundant information $\operatorname{Red}(S_1,\dots,S_N\to T)$---which is reflected by the PI-atom $\Pi^{T}_{\boldS}(\bigl\{\{1\}\dots\{N\}\bigl\})$---for any multivariate system~$\boldS$.

\begin{corollary}
\label{corollary: two result}
For the system $(S_1, S_2,S_3,T)$ and its sub-system $(S_1,\!S_2,\!T)$, $(S_1,\!T)$, the decomposed PI-atoms from different sub-systems have the following relationship:
\begin{align}
\label{equ:cross scale_0}
\Pi^{T}_{1}(\!\bigl\{\!\{1\}\!\bigl\}\!) = \Pi^{T}_{12}(\!\bigl\{\!\{1\}\{2\}\!\bigl\}\!) +\Pi^{T}_{12}(\!\bigl\{\!\{1\}\!\bigl\}\!),
\end{align}  
similarly, for the system $(S_1,S_2,S_3,T)$ and $(S_1,S_2,T)$,
\begin{align}
\label{equ:cross scale}
\Pi^{T}_{12}(\!\bigl\{\!\{1\}\{2\}\!\bigl\}\!) &= \Pi^{T}_{123}(\!\bigl\{\!\{1\}\!\{2\}\!\{3\}\!\bigl\}\!) +\Pi^{T}_{123}(\!\bigl\{\!\{1\}\!\{2\}\!\bigl\}\!),\\
\Pi^{T}_{12}(\!\bigl\{\!\{1\}\!\bigl\}\!) &= \Pi^{T}_{123}(\!\bigl\{\!\{1\}\{3\}\!\bigl\}\!)+\Pi^{T}_{123}(\!\bigl\{\!\{1\}\{23\}\!\bigl\}\!)\nonumber \\&+\Pi^{T}_{123}(\!\bigl\{\!\{1\}\!\bigl\}\!).\label{equ:cross scale_2}
\end{align}
\end{corollary}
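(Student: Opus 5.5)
The plan is to obtain each identity by applying Lemma~\ref{lemma: subsystem consistency} (equivalently, PID Axiom~\ref{pid axiom:mutual constrains} twice) to a suitable pair of subsystems. We then expand both sides as down-set sums and cancel the common terms. The only real work is enumerating the down-sets $\{\beta:\beta\preceq\{\boldA\}\}$ in the relevant lattices.

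For \eqref{equ:cross scale_0}, take $\boldA=\{S_1\}$, $\boldB=\{S_1\}$ and $\boldC=\{S_1,S_2\}$. In $\mathcal{A}(\{S_1\})$ the down-set of $\{\{1\}\}$ is just $\{\{1\}\}$. In $\mathcal{A}(\{S_1,S_2\})$, an antichain $\beta$ satisfies $\beta\preceq\{\{1\}\}$ iff some element of $\beta$ is contained in $\{1\}$, i.e.\ iff $\{1\}\in\beta$. This gives exactly $\{\{1\}\{2\}\}$ and $\{\{1\}\}$, so \eqref{equ:subsystem} yields \eqref{equ:cross scale_0} directly.

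For \eqref{equ:cross scale} and \eqref{equ:cross scale_2}, the plan is to compare the system $(S_1,S_2,T)$ with $(S_1,S_2,S_3,T)$ via Lemma~\ref{lemma: subsystem consistency} with $\boldA=\{S_1\}$, $\boldB=\{S_1,S_2\}$ and $\boldC=\boldS$.
\begin{itemize}
\item In $\mathcal{A}(\{S_1,S_2,S_3\})$, the antichains containing the singleton $\{1\}$ are $\{\{1\}\{2\}\{3\}\}$, $\{\{1\}\{2\}\}$, $\{\{1\}\{3\}\}$, $\{\{1\}\{23\}\}$ and $\{\{1\}\}$. The antichain condition excludes any other set containing $1$.
\item Hence $I(S_1;T)$ equals the sum of these five $\Pi^T_{123}$ atoms, and also equals $\Pi^T_{12}(\{\{1\}\{2\}\})+\Pi^T_{12}(\{\{1\}\})$.
\end{itemize}
This single identity does not split the two terms on its own, so I will use a second application with $\boldA=\{S_1,S_2\}$, $\boldB=\{S_1,S_2\}$ and $\boldC=\boldS$, together with the analogous identity for $\boldA=\{S_2\}$.

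The main obstacle is that these linear relations constrain sums, not the individual split between $\Pi_{12}(\{\{1\}\{2\}\})$ and $\Pi_{12}(\{\{1\}\})$. To separate the pieces I will use the redundancy identification. $\Pi^T_{12}(\{\{1\}\{2\}\})=\operatorname{Red}(S_1,S_2\to T)$ is the redundancy of $S_1,S_2$, and in the three-source system this redundancy is carried by atoms whose antichains contain both $\{1\}$ and $\{2\}$, namely $\{\{1\}\{2\}\{3\}\}$ and $\{\{1\}\{2\}\}$. I will take this as the intended consistency of the redundancy function across subsystems, i.e.\ the same principle behind Lemma~\ref{lemma: subsystem consistency} applied to the ``meet'' $\{\{1\}\{2\}\}$. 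That gives \eqref{equ:cross scale}.

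Subtracting \eqref{equ:cross scale} from the $I(S_1;T)$ identity then leaves exactly \eqref{equ:cross scale_2}. If the paper's axioms alone are deemed insufficient for \eqref{equ:cross scale}, I would state explicitly that the redundancy atom is subsystem-consistent in this way, since that is the reading the surrounding text gives.
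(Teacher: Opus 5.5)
Your proposal follows the paper's proof: \eqref{equ:cross scale_0} comes from Lemma~\ref{lemma: subsystem consistency} (your labeling $\boldA=\boldB=\{S_1\}$, $\boldC=\{S_1,S_2\}$ is in fact the consistent one). \eqref{equ:cross scale} comes from identifying the part shared by $I(S_1;T)$ and $I(S_2;T)$ across the two scopes, and \eqref{equ:cross scale_2} follows by subtraction. For \eqref{equ:cross scale}, the paper argues that $\Pi^{T}_{123}(\{\{1\}\{2\}\{3\}\})$ and $\Pi^{T}_{123}(\{\{1\}\{2\}\})$ are the only atoms lying in both down-sets (the antichains containing both $\{1\}$ and $\{2\}$), which is exactly your redundancy-consistency step; as you correctly note, this cross-scale identification is not forced by the WESP constraints alone, so stating it explicitly as an assumption is appropriate.
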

\begin{proof}
For the system $(S_1, S_2, T)$ and $(S_1, T)$, according to Lemma~\ref{lemma: subsystem consistency}, let $\boldA=\{S_1,S_2\}$, $\boldB=\{S_1\}$, and $\boldC=\{S_1\}$, then we have
\begin{align}\label{equ:decompose_of_0}
\Pi^{T}_{1}(\!\bigl\{\!\{1\}\!\bigl\}\!) = \Pi^{T}_{12}(\!\bigl\{\!\{1\}\{2\}\!\bigl\}\!) +\Pi^{T}_{12}(\!\bigl\{\!\{1\}\!\bigl\}\!).
\end{align}  
Similarly, for the system $(S_1, S_2, T)$ and $(S_2, T)$, we have 
\begin{align}
\Pi^{T}_{2}(\!\bigl\{\!\{2\}\!\bigl\}\!) = \Pi^{T}_{12}(\!\bigl\{\!\{1\}\{2\}\!\bigl\}\!) +\Pi^{T}_{12}(\!\bigl\{\!\{2\}\!\bigl\}\!),\nonumber
\end{align}  
where the information atoms contained in both $\Pi^{T}_{1}(\!\bigl\{\!\{1\}\!\bigl\}\!)$ and $\Pi^{T}_{2}(\!\bigl\{\!\{2\}\!\bigl\}\!)$ is $\Pi^{T}_{12}(\!\bigl\{\!\{1\}\{2\}\!\bigl\}\!)$.

Then, following the same approach, we focus on the system $(S_1,S_2,S_3,T)$ and $(S_1, T)$, i.e., we let $\boldA=\{S_1,S_2,S_3\}$, $\boldB=\{S_1\}$, and $\boldC=\{S_1\}$.
Then, by~Lemma~\ref{lemma: subsystem consistency} we have
\begin{align}\label{equ:decompose_of_1}
\Pi^{T}_{1}&(\!\bigl\{\!\{1\}\!\bigl\}\!) =\Pi^{T}_{123}(\!\bigl\{\!\{1\}\{2\}\{3\}\!\bigl\}\!) +\Pi^{T}_{123}(\!\bigl\{\!\{1\}\{2\}\!\bigl\}\!)\nonumber\\
&+\Pi^{T}_{123}(\!\bigl\{\!\{1\}\{3\}\!\bigl\}\!)+\Pi^{T}_{123}(\!\bigl\{\!\{1\}\{23\}\!\bigl\}\!)+\Pi^{T}_{123}(\!\bigl\{\!\{1\}\!\bigl\}\!).
\end{align}  
Similarly, for the system $(S_1, S_2,S_3,T)$ and $(S_2, T)$, we have
\begin{align}
\Pi^{T}_{2}&(\!\bigl\{\!\{2\}\!\bigl\}\!) =\Pi^{T}_{123}(\!\bigl\{\!\{1\}\{2\}\{3\}\!\bigl\}\!) +\Pi^{T}_{123}(\!\bigl\{\!\{1\}\{2\}\!\bigl\}\!)\nonumber\\
&+\Pi^{T}_{123}(\!\bigl\{\!\{2\}\{3\}\!\bigl\}\!)+\Pi^{T}_{123}(\!\bigl\{\!\{2\}\{13\}\!\bigl\}\!)+\Pi^{T}_{123}(\!\bigl\{\!\{2\}\!\bigl\}\!),\nonumber
\end{align}  
where the information atoms contained in both $\Pi^{T}_{1}(\!\bigl\{\!\{1\}\!\bigl\}\!)$ and $\Pi^{T}_{2}(\!\bigl\{\!\{2\}\!\bigl\}\!)$ are $\Pi^{T}_{123}(\!\bigl\{\!\{1\}\{2\}\{3\}\!\bigl\}\!) $ and $\Pi^{T}_{123}(\!\bigl\{\!\{1\}\{2\}\!\bigl\}\!)$.
Hence, we have
\begin{align}
\Pi^{T}_{12}(\!\bigl\{\!\{1\}\{2\}\!\bigl\}\!) &= \Pi^{T}_{123}(\!\bigl\{\!\{1\}\{2\}\{3\}\!\bigl\}\!) +\Pi^{T}_{123}(\!\bigl\{\!\{1\}\{2\}\!\bigl\}\!),\nonumber
\end{align}  
where $\{\Pi^{T}_{12}(\!\bigl\{\!\{1\}\{2\}\!\bigl\}\!)\}$ and $\{\Pi^{T}_{123}(\!\bigl\{\!\{1\}\{2\}\{3\}\!\bigl\}\!)$, $ \Pi^{T}_{123}(\!\bigl\{\!\{1\}\{2\}\!\bigl\}\!)\}$ are the only atom(s) that are contained in both $I(S_1,T)$ (i.e., $\Pi^{T}_{1}(\!\bigl\{\!\{1\}\!\bigl\}\!)$) and $I(S_2,T)$ (i.e., $\Pi^{T}_{2}(\!\bigl\{\!\{2\}\!\bigl\}\!)$) from the decompositions under the scope of $(S_1, S_2,T)$ and $(S_1, S_2,S_3,T)$.
Therefore,~\eqref{equ:cross scale} is proved. 
%\red{[Explain (in words) why there are no more atoms in the above equation.]}

Then, by~\eqref{equ:cross scale}, \eqref{equ:decompose_of_0}, and \eqref{equ:decompose_of_1}, we have
\begin{align*}
\Pi^{T}_{12}(\!\bigl\{\!\{1\}\!\bigl\}\!) \!=\! \Pi^{T}_{123}(\!\bigl\{\!\{1\}\!\{3\}\!\bigl\}\!)\!+\!\Pi^{T}_{123}(\!\bigl\{\!\{1\}\!\{23\}\!\bigl\}\!)\!+\!\Pi^{T}_{123}(\!\bigl\{\!\{1\}\!\bigl\}\!),
\end{align*}
which is~\eqref{equ:cross scale_2}.
\end{proof}
Axiom~\ref{pid axiom: Monotonicity} also implies another lemma, as follows.

\begin{lemma} [Nonnegativity]
\label{Lemma: Nonnegativity}
Partial Information Decomposition satisfies $\operatorname{Red}(S_1,\dots,S_N \to T) \ge 0$.
\begin{proof}
Add a constant variable $S^*$
% , where $H(S^*)=0$, 
to the sources and obtain $\operatorname{Red}(\boldA \to T) \ge \operatorname{Red}(\boldA,S^* \to T)$, which is $0$ since the constant variable $S^*$ cannot provide any information to the target $T$. 
\end{proof}
\end{lemma}

Using Lemma~\ref{Lemma: Nonnegativity} and Corollary~\ref{corollary: two result}, we prove Lemma~\ref{lemma: counter example}, \ref{lemma:UniqueUpToOneAtom}, and~\ref{lemma: satisfaction} sequentially.
\subsubsection{Proof of Lemma~\ref{lemma: counter example}}
\label{app:counter example}
\begin{proof}
In $(\bar{S}_1,\bar{S}_2,\bar{S}_3,\bar{T})$, let $\bar{S}_1$ and $\bar{S}_2$ be two independent~$\text{Bernoulli}(1/2)$ variables, let $\bar{S}_3 = \bar{S}_1 \oplus \bar{S}_2$, and let~$\bar{T}=(\bar{S}_1,\bar{S}_2,\bar{S}_3)$. Therefore, we have
\begin{align}
\label{equ:I(T;S_1,S_2,S_3)}
I(\bar{T};\bar{S}_1,\bar{S}_2,\bar{S}_3)=2.
\end{align}

Our subsequent proof idea is to use Property~\ref{property: Independent Identity} to obtain the values of all PI-atoms in any system with two sources and the target variable (i.e. $(\bar{S}_1,\bar{S}_2,\bar{T}),(\bar{S}_1,\bar{S}_3,\bar{T})$ and $(\bar{S}_2,\bar{S}_3,\bar{T})$) and then show that their sum will be greater than the joint mutual information of the system $(\bar{S}_1,\bar{S}_2,\bar{S}_3,\bar{T})$. For simplicity, throughout the following proof, we adopt the convention that all statements are considered for distinct~$i,j,k \in \{1,2,3\}$. 

Firstly, by Property~\ref{property: Independent Identity} (Independent Identity), and since $\bar{T} =(\bar{S}_1,\bar{S}_2,\bar{S}_3)\overset{\text{det}}{=}(\bar{S}_i,\bar{S}_j)$ we have
\begin{align}
\label{equ:all_three_is_zero}
     \Pi^{\bar{T}}_{ij}(\bigl\{\{i\}\{j\}\bigl\}) = 0.
\end{align}
Considering that
    $\Pi^{\bar{T}}_{ij}(\bigl\{\{i\}\{j\}\bigl\})= \Pi^{\bar{T}}_{123}(\bigl\{\{1\}\{2\}\{3\}\bigl\}) + \Pi^{\bar{T}}_{123}(\bigl\{\{i\}\{j\}\bigl\})$,
which is identical to~\eqref{equ:cross scale}, and by Axiom~\ref{pid axiom: Monotonicity} (Monotonicity) and Lemma~\ref{Lemma: Nonnegativity} (Nonnegativity) we have 
\begin{align}
\label{equ:pi12=0}
    \Pi^{\bar{T}}_{123}(\bigl\{\{1\}\{2\}\{3\}\bigl\}) = \Pi^{\bar{T}}_{123}(\bigl\{\{i\}\{j\}\bigl\}) =  0.
\end{align}
Similarly, \eqref{equ:Information Atoms' relationship_2} implies that $I(\bar{T};\bar{S}_i)= \Pi^{\bar{T}}_{ij}(\bigl\{\{i\}\{j\}\bigl\}) + \Pi^{\bar{T}}_{ij}(\bigl\{\{i\}\bigl\})$, and since $ I(\bar{T};\bar{S}_i) =1$ and due to~\eqref{equ:all_three_is_zero}, it follows that $\Pi^{\bar{T}}_{ij}(\bigl\{\{i\}\bigl\})=1$, 
which by Corollary~\ref{corollary: two result} (specifically~\eqref{equ:cross scale_2}), equals
\begin{align}
\label{equ:pi-red}
    % \eqref{equ:un=1} &=
    \Pi^{\bar{T}}_{123}(\bigl\{\{i\}\bigl\}) + \Pi^{\bar{T}}_{123}(\bigl\{\{i\}\{jk\}\bigl\}) + \Pi^{\bar{T}}_{123}(\bigl\{\{i\}\{k\}\bigl\}).
\end{align}
Then, by \eqref{equ:pi12=0} and \eqref{equ:pi-red}, we have
\begin{align}
\label{equ:sun=1}
\Pi^{\bar{T}}_{123}(\bigl\{\{i\}\bigl\})+ \Pi^{\bar{T}}_{123}(\bigl\{\{i\}\{jk\}\bigl\}) = 1,
\end{align}
and hence,
\begin{align*}
I(\bar{T};\bar{S}_1,\bar{S}_2,\bar{S}_3) &\ge \Pi^{\bar{T}}_{123}(\bigl\{\{1\}\bigl\}) + \Pi^{\bar{T}}_{123}(\bigl\{\{1\}\{23\}\bigl\})  \\&+ \Pi^{\bar{T}}_{123}(\bigl\{\{2\}\bigl\})\nonumber+ \Pi^{\bar{T}}_{123}(\bigl\{\{2\}\{13\}\bigl\}) \\&+ \Pi^{\bar{T}}_{123}(\bigl\{\{3\}\bigl\}) + \Pi^{\bar{T}}_{123}(\bigl\{\{3\}\{12\}\bigl\}) =3,
\end{align*}
which contradicts~\eqref{equ:I(T;S_1,S_2,S_3)}.
\end{proof}

\subsubsection{{Derivation} of SID Axiom~\ref{sid axiom:mutual constrains}}
\label{app:derive SID}
In SID, the mutual information between any two variables and the third one can be decomposed similarly to two-source PID. 
That is, for any distinct $i,j,k \in\{1,2,3\}$, $I({S_i,S_j}; S_k)$ splits into four SI-atoms (analogous to~\eqref{equ:Information Atoms' relationship_1}):
\begin{align}
I(S_i,S_j&;S_k) = \;\Psi_{\boldS}(\{\{i\}\{j\}\{k\}\}) + \Psi_{\boldS}(\{\{i\}\{k\}\}) \nonumber \\
&\phantom{=}+ \Psi_{\boldS}(\{\{j\}\{k\}\}) + \Psi_{\boldS}(\{\{ij\}\{k\}\}),
\label{equ:two_one_mutul}
\end{align}
and the two-variable mutual information $I(S_i; S_k)$ corresponds to two of those atoms (analogous to~\eqref{equ:Information Atoms' relationship_2}):
\begin{align}
\label{equ:one_one_mutul_decompose}
I(S_i;S_k) = \Psi_{\boldS}&(\{\{i\}\{j\}\{k\}\}) + \Psi_{\boldS}(\{\{i\}\{k\}\}).
\end{align}
Recall that we have $H(S_k) = I(S_i,S_j;S_k) + H(S_k|S_i,S_j)$ for any $k\in [3]$, and $\Psi_{\boldS}(\{\{k\}\})$ represents the information provided by~$S_k$ alone, i.e., $ \Psi_{\boldS}(\{\{k\}\})=H(S_k \mid S_i,S_j)$.
Therefore, we have
\begin{align}
\label{eq:single_variable_entropy}
H(S_k) &= \, I(S_i,S_j; S_k) + H(S_k|S_i,S_j)\nonumber \\
&\overset{\eqref{equ:two_one_mutul}}{=}  \Psi_{\boldS}(\{\{i\}\{j\}\{k\}\})+ \Psi_{\boldS}(\{\{ij\}\{k\}\}) \nonumber \\
+& \Psi_{\boldS}(\{\{j\}\{k\}\}) + \Psi_{\boldS}(\{\{i\}\{k\}\})+\Psi_{\boldS}(\{\{k\}\}) \nonumber \\
&=\sum_{\beta \preceq_{\boldS} \{\{S_k\}\}}\Psi_{\boldS}(\beta).
\end{align}

Similarly, 
for any two variables \(\{S_i,S_k\}\subseteq\boldS\), by combining $H(S_k|S_i)=H(S_k)-I(S_i;S_k)$ 
with~\eqref{equ:one_one_mutul_decompose} and~\eqref{eq:single_variable_entropy}, we have
\begin{align*}
H(S_k|S_i) &= \Psi_{\boldS}(\{\{ij\}\{k\}\}) + \Psi_{\boldS}(\{\{j\}\{k\}\}) + \Psi_{\boldS}(\{\{k\}\}),
\end{align*}
which, combined  with the fact that $H(S_i,S_k)=H(S_i)+H(S_k|S_i)$ and with~\eqref{eq:single_variable_entropy}, shows that the joint entropy of any two variables is the sum of all atoms dominated by that pair:
\begin{align}
\label{eq:two_variables_entropy}
H&(S_i,S_k) = \Psi_{\boldS}(\{\{i\}\{j\}\{k\}\}) + \Psi_{\boldS}(\{\{i\}\{k\}\}) \nonumber \\
&+ \Psi_{\boldS}(\{\{i\}\{j\}\}) + \Psi_{\boldS}(\{\{jk\}\{i\}\})+\Psi_{\boldS}(\{\{i\}\}) \nonumber \\ 
&+\Psi_{\boldS}(\{\{ij\}\{k\}\}) + \Psi_{\boldS}(\{\{j\}\{k\}\}) + \Psi_{\boldS}(\{\{k\}\}) \nonumber \\
&= \Sigma_{\{i,k\}},
\end{align}
where~$\Sigma_{\{i,k\}}$ is the summation of all atoms corresponding to antichains that are dominated either by~$\{\{S_i\}\}$ or by~$\{\{S_k\}\}$.

However, when extending the decomposition to the joint entropy of all three variables, the SID framework deviates from WESP due to the presence of synergy-induced redundancy. 
This discrepancy can be directly demonstrated as follows. 
By combining the fact that $H(S_i,S_j,S_k)=H(S_i,S_k)+H(S_j|S_i,S_k)$ with $\Psi_{\boldS}(\{\{j\}\})=H(S_j|S_i,S_k)$ and~\eqref{eq:two_variables_entropy},
\begin{align}
\label{eq:three_variables_entropy}
H(S_i,S_j,S_k)&=\Sigma_{\{i,k\}} + \Psi_{\boldS}(\{\{j\}\})\nonumber \\
&=\Sigma - \Psi_{\boldS}(\{\{ik\}\{j\}\}),
\end{align}
where~$\Sigma$ is the summation of all~$10$ atoms~$\Psi_{\boldS}(\alpha),\alpha\in\cA^*(\boldS)$.
Thus, unlike PID Axiom~\ref{pid axiom:mutual constrains}, we find that the total entropy is less than the sum of its decomposed parts by exactly $\Psi_{\boldS}(\{\{ij\}\{k\}\})$. 
In other words, WESP does not hold in SID due to this necessary exclusion.

Motivated by~\eqref{eq:single_variable_entropy}, \eqref{eq:two_variables_entropy}, and~\eqref{eq:three_variables_entropy}, 
we propose the SID Axiom~\ref{sid axiom:mutual constrains} alternative to PID Axiom~\ref{pid axiom:mutual constrains}.

\subsubsection{Proof of Lemma~\ref{lemma:UniqueUpToOneAtom}}
\label{app:UniqueUpToOneAtom}
\begin{proof}
We consider the linear constraints relating to the following ten unknowns (the ten SI-atoms of a three-variable system). Define the following vector of atoms:
\begin{align*}
X = \Bigl[
&\,\Psi_{123}(\{\{1\}\{2\}\{3\}\}),\\
&\Psi_{123}(\{\{1\}\{2\}\}),\Psi_{123}(\{\{1\}\{3\}\}),\Psi_{123}(\{\{2\}\{3\}\}),\\
&\Psi_{123}(\{\{1\}\{23\}\}),\Psi_{123}(\{\{2\}\{13\}\}),\Psi_{123}(\{\{3\}\{12\}\}),\\
&\Psi_{123}(\{\{1\}\}),\,\Psi_{123}(\{\{2\}\}),\,\Psi_{123}(\{\{3\}\}) \Bigr]^T,
\end{align*}
and the following vector of entropies:
\begin{align*}
Y \;=\; \Bigl[
&\;H(S_1),\;H(S_2),\;H(S_3),\\
&\,H(S_1,S_2),\;H(S_1,S_3),\;H(S_2,S_3),\\
&\,H(S_1,S_2,S_3),\;H(S_1,S_2,S_3),\;H(S_1,S_2,S_3)
\Bigr]^T.
\end{align*}
Then, the nine constraints which arise from SID Axiom~\ref{sid axiom:mutual constrains}, along with the conditions from SID Axiom~\ref{sid axiom:mutual constrains} are as follows.
\begin{align*}
\begin{bmatrix}
1 & 1 & 1 & 0 & 1 & 0 & 0 & 1 & 0 & 0\\
1 & 1 & 0 & 1 & 0 & 1 & 0 & 0 & 1 & 0\\
1 & 0 & 1 & 1 & 0 & 0 & 1 & 0 & 0 & 1\\
1 & 1 & 1 & 1 & 1 & 1 & 0 & 1 & 1 & 0\\
1 & 1 & 1 & 1 & 1 & 0 & 1 & 1 & 0 & 1\\
1 & 1 & 1 & 1 & 0 & 1 & 1 & 0 & 1 & 1\\
1 & 1 & 1 & 1 & 1 & 1 & 0 & 1 & 1 & 1\\
1 & 1 & 1 & 1 & 1 & 0 & 1 & 1 & 1 & 1\\
1 & 1 & 1 & 1 & 0 & 1 & 1 & 1 & 1 & 1
\end{bmatrix}
X \;=\; Y.
\end{align*}

Solving the system provides the following definition of all SI atoms given~$\operatorname{Red}(S_1,S_2,S_3)$:
\begin{align}\label{equation:explcitSIatomsGivenRED}
    \Psi_{123}(\{1\}\{2\}\{3\})&\triangleq\operatorname{Red}(S_{1},S_2,S_{3})\nonumber\\
    \Psi_{123}(\big\{\{ij\}\big\})&=H(S_i)+H(S_j)\nonumber\\
    &-H(S_i,S_j)-\operatorname{Red}(S_{1},S_2,S_{3})\nonumber\\
    \Psi_{123}(\big\{\{i\}\{jk\}\big\})&=-H(S_1)-H(S_2)-H(S_3)\nonumber\\
    &+H(S_1,S_2)+H(S_1,S_3)+H(S_2,S_3)\nonumber\\
    &-H(S_1,S_2,S_3)+\operatorname{Red}(S_{1},S_2,S_{3}) \nonumber \\
    \Psi_{123}(\big\{\{i\}\big\})&=H(S_1,S_2,S_3)-H(S_j,S_k) 
\end{align}
for all~$i,j,k$ such that \(\{i,j,k\} = \{1,2,3\}\).
\end{proof}

% These nine equations form a linear system of rank \(9\) in the ten unknowns. Consequently, the solution space is one-dimensional: any single SI-atom can serve as a free variable, and once its value is specified, the other nine become uniquely determined. This proves that fixing the value of exactly one SI-atom is sufficient to determine all others.
% \red{[Incomplete proof. The solution space is of the form~$\{ t\Vec{v}+\Vec{u}|t\in\mathbb{R} \}$, hence fixing any entry~$i$ of a vector in the solution space \textbf{where~$v_i\ne 0$} determines~$t$, and hence determines the remaining entries of that vector as well. 
% Why would~$\operatorname{Red}$ be the nonzero entry? Solution: Solve the system, and show that the first entry of the vector~$\Vec{v}$ is nonzero.]}
% \end{proof}

% \subsection{Explicit Forms of SI atoms given \(\operatorname{Red}\).}
% \label{app:atoms}
% Building on Lemma~\ref{lemma:UniqueUpToOneAtom}, if we further specify a particular value for \(\operatorname{Red}(S_1, S_2, S_3)\) (i.e., for \(\Psi_{123}(\{\{1\}\{2\}\{3\}\})\)), then each remaining atom can be written in closed form as follows.
% \begin{align*}
%     \Psi_{123}(\{1\}\{2\}\{3\})&=\operatorname{Red}(S_{1},S_2,S_{3})\\
%     \Psi_{123}(\big\{\{ij\}\big\})&=H(S_i)+H(S_j)\\&-H(S_i,S_j)-\operatorname{Red}(S_{1},S_2,S_{3})\\
%     \Psi_{123}(\big\{\{i\}\{jk\}\big\})&=-H(S_1)-H(S_2)-H(S_3)\\&+H(S_1,S_2)+H(S_1,S_3)+H(S_2,S_3)\\&-H(S_1,S_2,S_3)+\operatorname{Red}(S_{1},S_2,S_{3}) \\\Psi_{123}(\big\{\{i\}\big\})&=H(S_1,S_2,S_3)-H(S_j,S_k) \qedhere
% \end{align*}
% where \(\{i,j,k\} = \{1,2,3\}\).

\subsubsection{Proof of Lemma~\ref{lemma: satisfaction}}
\label{app: satisfaction}
\begin{proof}~
SID Axiom~\ref{sid axiom: commutativity} (Commutativity) is clearly satisfied by Definition~\ref{definition:red}, since the condition is symmetric with respect to the input variables; SID Axiom~\ref{sid axiom: Self-redundancy} (Self-redundancy) is also satisfied by the definition. %~\eqref{equ:def_0}
SID Axiom~\ref{sid axiom: Monotonicity} (Monotonicity) follows from Definition~\ref{definition:red} since adding a new variable imposes additional constraints on the maximization:
%, leading to a non-increasing sequence of values:
\begin{align*}
    \operatorname{Red}&(S_1,S_2,S_3) = \max_{Q}\{H(Q) : H(Q \mid S_i) = 0, \forall i \in [3] \} \\ 
    &\leq \max_{Q} \{H(Q) : H(Q \mid S_i) = 0, H(Q \mid S_j) = 0 \} \\ 
    &= \operatorname{CI}(S_i, S_j), %\quad \forall i \neq j \in \{1,2,3\},
\end{align*}
for every distinct~$i$ and~$j$ in~$\{1,2,3\}$, where the last equality follows from the definition $\operatorname{CI}(S_1, S_2) \triangleq \max_{Q} H(Q), \text{s.t. } H(Q | S_1) = H(Q | S_2) = 0$~\cite{gacs1973common}. Moreover, since \(\operatorname{CI}(S_i, S_j) \leq I(S_i; S_j)\) \cite{gacs1973common},
% \red{[provide exact reference for this claim.]}
it follows that
\begin{align*}
    \operatorname{Red}(S_1, S_2, S_3) &\leq I(S_i; S_j), %\forall i \neq j \in \{1,2,3\}.\qedhere
\end{align*}
for every distinct~$i$,~$j$ in~$\{1,2,3\}$, hence SID Axiom~\ref{sid axiom: Monotonicity} follows.
\end{proof}

\subsection{Proof of Lemma~\ref{lem:canonical-atom-assignment}}
\label{app:proof_principal_antichain}

\begin{proof}
Fix $j\in J_T$ with $H(x_j)>0$ and recall $S_{\boldB}\triangleq (S_i)_{i\in\boldB}$.
We first present two basic properties.

\emph{(P1)}
For every $B$ and $B'$ such that $B \subseteq B' \subseteq [n]$, if $H(x_j\mid S_{\boldB})=0$, then $H(x_j\mid S_{\boldB'})=0$.
Indeed, $S_{\boldB}$ is a deterministic function of $S_{\boldB'}$, so conditioning on $S_{\boldB'}$
cannot increase the conditional entropy.

It follows that $\mathsf{Rec}(x_j)$ is \emph{upward closed} under $\subseteq$.
Consequently, the set $\alpha(x_j)$ of $\subseteq$-minimal elements of $\mathsf{Rec}(x_j)$ is an antichain:
if $\boldB_1,\boldB_2\in \alpha(x_j)$ and $\boldB_1\subsetneq \boldB_2$, then $\boldB_2$ would not be minimal.
Hence $\alpha(x_j)\in\mathcal{A}(\boldS)$.

\emph{(P2) }
By definition of $\alpha(x_j)$, we have the equivalence
\begin{equation}
\label{eq:rec-equivalence}
\boldB\in\mathsf{Rec}(x_j)
\quad\Longleftrightarrow\quad
\exists\,\boldA\in \alpha(x_j)\ \text{s.t. }\ \boldA\subseteq \boldB .
\end{equation}
The forward implication holds because any $\boldB\in\mathsf{Rec}(x_j)$ contains a minimal element of
$\mathsf{Rec}(x_j)$; the reverse implication follows from (P1).

Now for each $\alpha\in\mathcal{A}(\boldS)$ define
\[
U_\alpha \triangleq (x_j:\ j\in J_T,\ \alpha(x_j)=\alpha).
\]

We claim that $U_\alpha$ realizes exactly the \emph{intuitive first principle} of the label $\alpha$.
For any $B \in \alpha$ and for any component $x_j$ with $\alpha(x_j)=\alpha$, we have $\boldB\in\mathsf{Rec}(x_j)$ by
construction (since $\boldB$ is one of the minimal recovering groups), hence $H(x_j\mid S_{\boldB})=0$.
Therefore $H(U_\alpha\mid S_{\boldB})=0$, i.e., $U_\alpha$ is recoverable from every source group in $\alpha$.

Next, consider any strictly weaker label $\beta\prec_{\boldS}\alpha$.
By the definition of the antichain order, for each $\boldB\in\alpha$ there exists $\boldC\in\beta$ with
$\boldC\subseteq \boldB$, and strictness means that for some $\boldB^\star\in\alpha$ one can choose
$\boldC^\star\in\beta$ with $\boldC^\star\subsetneq \boldB^\star$.
Fix any component $x_j$ with $\alpha(x_j)=\alpha$ and take $\boldB^\star\in\alpha(x_j)$ corresponding to that strict
containment.
Then $\boldC^\star\notin \mathsf{Rec}(x_j)$ by minimality of $\boldB^\star$.
By Definition~\ref{def:ideal-semantics}, this implies
$I(x_j;S_{\boldC^\star})=0$.
Hence $U_\alpha$ cannot be fully recovered under the weaker label $\beta$ in the sense that at least one source
group in $\beta$ carries zero information about at least one entry of $U_\alpha$.

Finally, since the components $\{x_j\}_{j\in J_T}$ constitute $T$ (Definition~\ref{def:ideal-semantics}),
the atom value assigned to $\alpha$ is canonically determined by the collection of components whose principal
antichain equals $\alpha$, namely
\[
\Pi^T_{\boldS}(\alpha)\triangleq H(U_\alpha),
 \forall \alpha\in\mathcal{A}(\boldS),
\]
which concludes the proof.
\end{proof}

\subsection{Proof of Lemma~\ref{lemma:groundtruth-pair}}
\label{app:pair-construction}
\begin{proof}
We show that both $(\hat{\boldS},\hat{T})$ and $(\tilde{\boldS},\tilde{T})$ satisfy
Definition~\ref{def:ideal-semantics} with the same index sets
\[
J_1=\{1,4,7\}, J_2=\{2,5,8\}, J_3=\{3,6,9\}, J_T=\{1,5,9\},
\]
and hence admit canonical atoms via Lemma~\ref{lem:canonical-atom-assignment}.

\smallskip
\noindent\textbf{Step 1 (Definition~\ref{def:ideal-semantics}(i)).}

In both systems $J_T=\{1,5,9\}$. 
Hence, if $H(x_j\mid \hat{T})=0$ (resp.\ $H(x_j\mid \tilde{T})=0$), then necessarily $j\in\{1,5,9\}=J_T$.
Indeed, for any $j\notin\{1,5,9\}$, the variable $x_j$ depends on at least one latent bit that is not determined by $T$,
so $H(x_j\mid T)>0$.

\smallskip
\noindent\textbf{Step 2 (Definition~\ref{def:ideal-semantics}(ii)).}
\paragraph{For system $(\hat{\boldS},\hat{\boldT})$}
In both $\hat{S}_1=(x_1,x_4,x_7)$ and $\hat{S}_2=(x_2,x_5,x_8)$, we have that $x_1,x_4,$ and $x_7$ are mutually independent, and $x_2,x_5,$ and $x_8$ are mutually independent by construction. 
Then, $S_3=(x_3,x_6,x_9)$ has mutually independent components since each of
$x_3,x_6,x_9$ is a function of independent bits supported on disjoint
inputs.
Thus Definition~\ref{def:ideal-semantics}(i) holds for $(\hat{\boldS},\hat{\boldT})$.

\paragraph{For system $(\tilde{\boldS},\tilde{\boldT})$}
In $\tilde{S}_1=(x_1,x_4,x_7)$ we have that $x_1,x_4,$ and $x_7$ are mutually independent by construction. 
$\tilde{S}_2=(x_2,x_5,x_8)$ has mutually independent components 
because $x_8=x_7\oplus x_1\oplus x_5$ is an XOR-mask of $x_5$ by the independent Bernoulli$(1/2)$ bit $x_7\oplus x_1$.
Finally, $S_3=(x_3,x_6,x_9)$ has mutually independent components since each of $x_3,x_6,x_9$ is a function of independent bits supported on disjoint inputs ($(x_3,x_6,x_9)$ is an invertible linear transform of independent bits).
Thus Definition~\ref{def:ideal-semantics}(i) holds for $(\tilde{\boldS},\tilde{\boldT})$.

\smallskip
\noindent\textbf{Step 3 (Definition~\ref{def:ideal-semantics}(iii) and principal antichains).}

We verify $j\in J_T=\{1,5,9\}$ by identifying the minimal recovering sets.
We use a standard fact that if $u\sim\mathrm{Bern}(1/2)$ and
$u\perp v$ then $v\perp (u\oplus v)$ (this is known as XOR masking or one-time pad). Also, we have the fact that in Lemma~\ref{lem:canonical-atom-assignment}, the recoverability in~\eqref{eq:recovering} is monotone in $\boldB$.

\setcounter{paragraph}{0}
\paragraph{For system $(\hat{\boldS},\hat{T})$}

\emph{(i) Component $x_1$.}
We have $H(x_1\mid \hat{S}_1)=0$. Also $H(x_1\mid \hat{S}_2,\hat{S}_3)=0$ since $x_2\in\hat{S}_2$, $x_3\in\hat{S}_3$,
and $x_1=x_2\oplus x_3$.
Moreover, $I(x_1;\hat{S}_2)=0$ because $x_1$ is independent of $(x_2,x_5,x_8)$, and $I(x_1;\hat{S}_3)=0$ because
$x_3=x_1\oplus x_2$ is a one-time-pad masking of $x_1$ by the independent bit $x_2$ (while $x_6,x_9$ are supported on disjoint independent bits).
Hence the only minimal recovering sets are $\{1\}$ and $\{2,3\}$, so
\[
\alpha(x_1)=\bigl\{\{1\},\{2,3\}\bigr\}.
\]

\emph{(ii) Component $x_5$.}
We have $H(x_5\mid \hat{S}_2)=0$. Also $H(x_5\mid \hat{S}_1,\hat{S}_3)=0$ since $x_4\in\hat{S}_1$, $x_6\in\hat{S}_3$,
and $x_5=x_4\oplus x_6$.
Moreover, $I(x_5;\hat{S}_1)=0$ by independence, and $I(x_5;\hat{S}_3)=0$ because $x_6=x_4\oplus x_5$ is a one-time-pad masking of $x_5$ by $x_4$.
Thus
\[
\alpha(x_5)=\bigl\{\{2\},\{1,3\}\bigr\}.
\]

\emph{(iii) Component $x_9$.}
We have $H(x_9\mid \hat{S}_3)=0$, and $H(x_9\mid \hat{S}_1,\hat{S}_2)=0$ since $x_9=x_7\oplus x_8$ with
$x_7\in\hat{S}_1$ and $x_8\in\hat{S}_2$.
Moreover, $I(x_9;\hat{S}_1)=I(x_9;\hat{S}_2)=0$ since each single source provides only one addend of $x_7\oplus x_8$.
Hence
\[
\alpha(x_9)=\bigl\{\{3\},\{1,2\}\bigr\}.
\]

\paragraph{For system $(\tilde{\boldS},\tilde{T})$}
The recovery arguments are the same as in the previous system:
$H(x_1\mid \tilde{S}_1)=0$ and $H(x_1\mid \tilde{S}_2,\tilde{S}_3)=0$ via $x_1=x_2\oplus x_3$;
$H(x_5\mid \tilde{S}_2)=0$ and $H(x_5\mid \tilde{S}_1,\tilde{S}_3)=0$ via $x_5=x_4\oplus x_6$;
and $H(x_9\mid \tilde{S}_3)=0$ and $H(x_9\mid \tilde{S}_1,\tilde{S}_2)=0$ since $x_9=x_7\oplus x_8$ holds by construction.

It remains to check that no \emph{single} source reveals information about these components, so that the minimal recovering sets
stay the same.

\emph{(i)} For $x_1$, note that $\tilde{S}_2$ contains $x_8=x_7\oplus x_1\oplus x_5$, which is a one-time-pad masking of $x_1$ by the independent uniform key
$x_7\oplus x_5$ (independent of $x_1$); hence $I(x_1;\tilde{S}_2)=0$. Similarly, $\tilde{S}_3$ contains $x_3=x_1\oplus x_2$ and $x_9=x_1\oplus x_5$,
which are independent one-time-pad maskings of $x_1$ by $x_2$ and $x_5$, so $I(x_1;\tilde{S}_3)=0$.

\emph{(ii)} For $x_5$, the case $I(x_5;\tilde{S}_1)=0$ is similar to the previous system, and $I(x_5;\tilde{S}_3)=0$ still holds because
$x_6=x_4\oplus x_5$ masks $x_5$ by $x_4$ and $x_9=x_1\oplus x_5$ masks $x_5$ by $x_1$, with independent uniform keys.

\emph{(iii)} For $x_9$, the case is similar to the previous system: $\tilde{S}_1$ and $\tilde{S}_2$ each contains only one addend of $x_7\oplus x_8$ (equivalently, one masked view of $x_9$),
so $I(x_9;\tilde{S}_1)=I(x_9;\tilde{S}_2)=0$.

Therefore, in the tilde system the minimal recovering sets are the same as in the hat system, and we again obtain
\begin{align*}
\alpha(x_1)=\bigl\{\{1\},\{2,3\}\bigr\},
\alpha(x_5)=\bigl\{\{2\},\{1,3\}\bigr\},\\ \text{ and }
\alpha(x_9)=\bigl\{\{3\},\{1,2\}\bigr\}.  
\end{align*}
Thus Definition~\ref{def:ideal-semantics}(ii) holds for all $j\in J_T$ in both systems, and the principal antichains coincide.

\smallskip
\noindent\textbf{Step 4 (Coincidence of atoms).}

By Lemma~\ref{lem:canonical-atom-assignment}, the only nonzero atoms are those indexed by
$\alpha(x_1),\alpha(x_5),\alpha(x_9)$, and
\begin{align*}
\Pi^T_{\boldS}(\{\{1\}\{23\}\})=H(x_1)=1, \\
\Pi^T_{\boldS}(\{\{2\}\{13\}\})=H(x_5)=1,\\
\Pi^T_{\boldS}(\{\{3\}\{12\}\})=H(x_9)=1,  
\end{align*}

with all remaining atoms equal to $0$, in both systems. This proves Lemma~\ref{lemma:groundtruth-pair} (i).

\smallskip
\noindent\textbf{Step 5 (Different mutual informations).}

In both systems, $T=(x_1,x_5,x_9)$ is a deterministic function of $\boldS$ (since $x_1$ is in $S_1$, $x_5$ is in $S_2$,
and $x_9$ is in $S_3$), hence $H(T\mid \boldS)=0$ and $I(\boldS;T)=H(T)$.
For the hat system, $x_1,x_5,x_9$ are mutually independent, so $H(\hat{T})=3$ and $I(\hat{\boldS};\hat{T})=3$.
For the tilde system, $x_9=x_1\oplus x_5$, so $H(\tilde{T})=H(x_1,x_5)=2$ and $I(\tilde{\boldS};\tilde{T})=2$.
Thus $I(\hat{\boldS};\hat{T})\neq I(\tilde{\boldS};\tilde{T})$, proving Lemma~\ref{lemma:groundtruth-pair} (ii).
\end{proof}

\paragraph*{Proof intuition}
The three target bits $x_1,x_5,x_9$ have the same minimal recoverability patterns in both systems:
$x_1$ is recoverable from $S_1$ and from $(S_2,S_3)$ via $x_1=x_2\oplus x_3$, but not from $S_2$ or $S_3$ alone;
$x_5$ is recoverable from $S_2$ and from $(S_1,S_3)$ via $x_5=x_4\oplus x_6$, but not from $S_1$ or $S_3$ alone;
and $x_9$ is recoverable from $S_3$ and from $(S_1,S_2)$ via $x_9=x_7\oplus x_8$, but not from $S_1$ or $S_2$ alone.
Therefore $\alpha(x_1)=\{\{1\},\{23\}\}$, $\alpha(x_5)=\{\{2\},\{13\}\}$, and $\alpha(x_9)=\{\{3\},\{12\}\}$ in both cases,
which forces the same three nonzero atoms under Lemma~\ref{lem:canonical-atom-assignment}.

\subsection{Full probability tables for Fig.~\ref{fig:System23}}
\label{app:alltargets}

Tables~\ref{tab:hat-pmf} and~\ref{tab:tilde-pmf} list the full joint PMFs of
$(\hat{S}_1,\hat{S}_2,\hat{S}_3,\hat{T})$ and $(\tilde{S}_1,\tilde{S}_2,\tilde{S}_3,\tilde{T})$, respectively.
All unlisted outcomes have probability $0$.

\newpage
\clearpage
\onecolumn

% （可选）如果还很宽，可以配合横向页面：
% \usepackage{pdflscape}
% \begin{landscape}

\begingroup
\scriptsize
\renewcommand{\arraystretch}{0.95}
\setlength{\LTleft}{0pt}
\setlength{\LTright}{0pt}

\begin{center}
\scriptsize
\renewcommand{\arraystretch}{0.95}
\begin{longtable}{c c c c c}
\caption{Joint probability table of $(\hat{S}_1,\hat{S}_2,\hat{S}_3,\hat{T})$ in Fig.~\ref{fig:System23}.}\label{tab:hat-pmf}\\
\hline
$\hat{S}_1$ & $\hat{S}_2$ & $\hat{S}_3$ & $\hat{T}$ & $\Pr$ \\
$(x_1,x_4,x_7)$ & $(x_2,x_5,x_8)$ & $(x_3,x_6,x_9)$ & $(x_1,x_5,x_9)$ & \\
\hline
\endfirsthead
\hline
$\hat{S}_1$ & $\hat{S}_2$ & $\hat{S}_3$ & $\hat{T}$ & $\Pr$ \\
$(x_1,x_4,x_7)$ & $(x_2,x_5,x_8)$ & $(x_3,x_6,x_9)$ & $(x_1,x_5,x_9)$ & \\
\hline
\endhead
\texttt{000} & \texttt{000} & \texttt{000} & \texttt{000} & $2^{-6}$ \\
\texttt{000} & \texttt{001} & \texttt{001} & \texttt{001} & $2^{-6}$ \\
\texttt{000} & \texttt{010} & \texttt{010} & \texttt{010} & $2^{-6}$ \\
\texttt{000} & \texttt{011} & \texttt{011} & \texttt{011} & $2^{-6}$ \\
\texttt{000} & \texttt{100} & \texttt{100} & \texttt{000} & $2^{-6}$ \\
\texttt{000} & \texttt{101} & \texttt{101} & \texttt{001} & $2^{-6}$ \\
\texttt{000} & \texttt{110} & \texttt{110} & \texttt{010} & $2^{-6}$ \\
\texttt{000} & \texttt{111} & \texttt{111} & \texttt{011} & $2^{-6}$ \\
\texttt{001} & \texttt{000} & \texttt{001} & \texttt{001} & $2^{-6}$ \\
\texttt{001} & \texttt{001} & \texttt{000} & \texttt{000} & $2^{-6}$ \\
\texttt{001} & \texttt{010} & \texttt{011} & \texttt{011} & $2^{-6}$ \\
\texttt{001} & \texttt{011} & \texttt{010} & \texttt{010} & $2^{-6}$ \\
\texttt{001} & \texttt{100} & \texttt{101} & \texttt{001} & $2^{-6}$ \\
\texttt{001} & \texttt{101} & \texttt{100} & \texttt{000} & $2^{-6}$ \\
\texttt{001} & \texttt{110} & \texttt{111} & \texttt{011} & $2^{-6}$ \\
\texttt{001} & \texttt{111} & \texttt{110} & \texttt{010} & $2^{-6}$ \\
\texttt{010} & \texttt{000} & \texttt{010} & \texttt{000} & $2^{-6}$ \\
\texttt{010} & \texttt{001} & \texttt{011} & \texttt{001} & $2^{-6}$ \\
\texttt{010} & \texttt{010} & \texttt{000} & \texttt{010} & $2^{-6}$ \\
\texttt{010} & \texttt{011} & \texttt{001} & \texttt{011} & $2^{-6}$ \\
\texttt{010} & \texttt{100} & \texttt{110} & \texttt{000} & $2^{-6}$ \\
\texttt{010} & \texttt{101} & \texttt{111} & \texttt{001} & $2^{-6}$ \\
\texttt{010} & \texttt{110} & \texttt{100} & \texttt{010} & $2^{-6}$ \\
\texttt{010} & \texttt{111} & \texttt{101} & \texttt{011} & $2^{-6}$ \\
\texttt{011} & \texttt{000} & \texttt{011} & \texttt{001} & $2^{-6}$ \\
\texttt{011} & \texttt{001} & \texttt{010} & \texttt{000} & $2^{-6}$ \\
\texttt{011} & \texttt{010} & \texttt{001} & \texttt{011} & $2^{-6}$ \\
\texttt{011} & \texttt{011} & \texttt{000} & \texttt{010} & $2^{-6}$ \\
\texttt{011} & \texttt{100} & \texttt{111} & \texttt{001} & $2^{-6}$ \\
\texttt{011} & \texttt{101} & \texttt{110} & \texttt{000} & $2^{-6}$ \\
\texttt{011} & \texttt{110} & \texttt{101} & \texttt{011} & $2^{-6}$ \\
\texttt{011} & \texttt{111} & \texttt{100} & \texttt{010} & $2^{-6}$ \\
\texttt{100} & \texttt{000} & \texttt{100} & \texttt{100} & $2^{-6}$ \\
\texttt{100} & \texttt{001} & \texttt{101} & \texttt{101} & $2^{-6}$ \\
\texttt{100} & \texttt{010} & \texttt{110} & \texttt{110} & $2^{-6}$ \\
\texttt{100} & \texttt{011} & \texttt{111} & \texttt{111} & $2^{-6}$ \\
\texttt{100} & \texttt{100} & \texttt{000} & \texttt{100} & $2^{-6}$ \\
\texttt{100} & \texttt{101} & \texttt{001} & \texttt{101} & $2^{-6}$ \\
\texttt{100} & \texttt{110} & \texttt{010} & \texttt{110} & $2^{-6}$ \\
\texttt{100} & \texttt{111} & \texttt{011} & \texttt{111} & $2^{-6}$ \\
\texttt{101} & \texttt{000} & \texttt{101} & \texttt{101} & $2^{-6}$ \\
\texttt{101} & \texttt{001} & \texttt{100} & \texttt{100} & $2^{-6}$ \\
\texttt{101} & \texttt{010} & \texttt{111} & \texttt{111} & $2^{-6}$ \\
\texttt{101} & \texttt{011} & \texttt{110} & \texttt{110} & $2^{-6}$ \\
\texttt{101} & \texttt{100} & \texttt{001} & \texttt{101} & $2^{-6}$ \\
\texttt{101} & \texttt{101} & \texttt{000} & \texttt{100} & $2^{-6}$ \\
\texttt{101} & \texttt{110} & \texttt{011} & \texttt{111} & $2^{-6}$ \\
\texttt{101} & \texttt{111} & \texttt{010} & \texttt{110} & $2^{-6}$ \\
\texttt{110} & \texttt{000} & \texttt{110} & \texttt{100} & $2^{-6}$ \\
\texttt{110} & \texttt{001} & \texttt{111} & \texttt{101} & $2^{-6}$ \\
\texttt{110} & \texttt{010} & \texttt{100} & \texttt{110} & $2^{-6}$ \\
\texttt{110} & \texttt{011} & \texttt{101} & \texttt{111} & $2^{-6}$ \\
\texttt{110} & \texttt{100} & \texttt{010} & \texttt{100} & $2^{-6}$ \\
\texttt{110} & \texttt{101} & \texttt{011} & \texttt{101} & $2^{-6}$ \\
\texttt{110} & \texttt{110} & \texttt{000} & \texttt{110} & $2^{-6}$ \\
\texttt{110} & \texttt{111} & \texttt{001} & \texttt{111} & $2^{-6}$ \\
\texttt{111} & \texttt{000} & \texttt{111} & \texttt{101} & $2^{-6}$ \\
\texttt{111} & \texttt{001} & \texttt{110} & \texttt{100} & $2^{-6}$ \\
\texttt{111} & \texttt{010} & \texttt{101} & \texttt{111} & $2^{-6}$ \\
\texttt{111} & \texttt{011} & \texttt{100} & \texttt{110} & $2^{-6}$ \\
\texttt{111} & \texttt{100} & \texttt{011} & \texttt{101} & $2^{-6}$ \\
\texttt{111} & \texttt{101} & \texttt{010} & \texttt{100} & $2^{-6}$ \\
\texttt{111} & \texttt{110} & \texttt{001} & \texttt{111} & $2^{-6}$ \\
\texttt{111} & \texttt{111} & \texttt{000} & \texttt{110} & $2^{-6}$ \\
\hline
\end{longtable}
\end{center}

\newpage
\begin{center}
\scriptsize
\renewcommand{\arraystretch}{0.95}
\begin{longtable}{c c c c c}
\caption{Joint probability table of $(\tilde{S}_1,\tilde{S}_2,\tilde{S}_3,\tilde{T})$ in Fig.~\ref{fig:System23}.}\label{tab:tilde-pmf}\\
\hline
$\tilde{S}_1$ & $\tilde{S}_2$ & $\tilde{S}_3$ & $\tilde{T}$ & $\Pr$ \\
$(x_1,x_4,x_7)$ & $(x_2,x_5,x_8)$ & $(x_3,x_6,x_9)$ & $(x_1,x_5,x_9)$ & \\
\hline
\endfirsthead
\hline
$\tilde{S}_1$ & $\tilde{S}_2$ & $\tilde{S}_3$ & $\tilde{T}$ & $\Pr$ \\
$(x_1,x_4,x_7)$ & $(x_2,x_5,x_8)$ & $(x_3,x_6,x_9)$ & $(x_1,x_5,x_9)$ & \\
\hline
\endhead
\texttt{000} & \texttt{000} & \texttt{000} & \texttt{000} & $2^{-5}$ \\
\texttt{000} & \texttt{011} & \texttt{011} & \texttt{011} & $2^{-5}$ \\
\texttt{000} & \texttt{100} & \texttt{100} & \texttt{000} & $2^{-5}$ \\
\texttt{000} & \texttt{111} & \texttt{111} & \texttt{011} & $2^{-5}$ \\
\texttt{001} & \texttt{001} & \texttt{000} & \texttt{000} & $2^{-5}$ \\
\texttt{001} & \texttt{010} & \texttt{011} & \texttt{011} & $2^{-5}$ \\
\texttt{001} & \texttt{101} & \texttt{100} & \texttt{000} & $2^{-5}$ \\
\texttt{001} & \texttt{110} & \texttt{111} & \texttt{011} & $2^{-5}$ \\
\texttt{010} & \texttt{000} & \texttt{010} & \texttt{000} & $2^{-5}$ \\
\texttt{010} & \texttt{011} & \texttt{001} & \texttt{011} & $2^{-5}$ \\
\texttt{010} & \texttt{100} & \texttt{110} & \texttt{000} & $2^{-5}$ \\
\texttt{010} & \texttt{111} & \texttt{101} & \texttt{011} & $2^{-5}$ \\
\texttt{011} & \texttt{001} & \texttt{010} & \texttt{000} & $2^{-5}$ \\
\texttt{011} & \texttt{010} & \texttt{001} & \texttt{011} & $2^{-5}$ \\
\texttt{011} & \texttt{101} & \texttt{110} & \texttt{000} & $2^{-5}$ \\
\texttt{011} & \texttt{110} & \texttt{101} & \texttt{011} & $2^{-5}$ \\
\texttt{100} & \texttt{000} & \texttt{100} & \texttt{100} & $2^{-5}$ \\
\texttt{100} & \texttt{011} & \texttt{111} & \texttt{111} & $2^{-5}$ \\
\texttt{100} & \texttt{100} & \texttt{000} & \texttt{100} & $2^{-5}$ \\
\texttt{100} & \texttt{111} & \texttt{011} & \texttt{111} & $2^{-5}$ \\
\texttt{101} & \texttt{001} & \texttt{100} & \texttt{100} & $2^{-5}$ \\
\texttt{101} & \texttt{010} & \texttt{111} & \texttt{111} & $2^{-5}$ \\
\texttt{101} & \texttt{101} & \texttt{000} & \texttt{100} & $2^{-5}$ \\
\texttt{101} & \texttt{110} & \texttt{011} & \texttt{111} & $2^{-5}$ \\
\texttt{110} & \texttt{000} & \texttt{110} & \texttt{100} & $2^{-5}$ \\
\texttt{110} & \texttt{011} & \texttt{101} & \texttt{111} & $2^{-5}$ \\
\texttt{110} & \texttt{100} & \texttt{010} & \texttt{100} & $2^{-5}$ \\
\texttt{110} & \texttt{111} & \texttt{001} & \texttt{111} & $2^{-5}$ \\
\texttt{111} & \texttt{001} & \texttt{110} & \texttt{100} & $2^{-5}$ \\
\texttt{111} & \texttt{010} & \texttt{101} & \texttt{111} & $2^{-5}$ \\
\texttt{111} & \texttt{101} & \texttt{010} & \texttt{100} & $2^{-5}$ \\
\texttt{111} & \texttt{110} & \texttt{001} & \texttt{111} & $2^{-5}$ \\
\hline
\end{longtable}
\end{center}

\endgroup

\clearpage
\twocolumn

}

\end{document}